\newcommand\numberthis{\addtocounter{equation}{1}\tag{\theequation}}
\newcommand{\algstartblock}
{	
	\ifnum\value{algcounter}<10{
		\addtocounter{algblockspace}{12}
	}\else{
		\addtocounter{algblockspace}{12}
	}
	\fi
}
\newcommand{\algendblock}{
	\ifnum\value{algcounter}<10{
		\addtocounter{algblockspace}{-12}
	}\else{
		\addtocounter{algblockspace}{-12}
	}
	\fi
}
\newcommand{\alg}[1]{
	\ifnum\value{algcounter}<10{
		\begin{enumerate}[\arabic{algcounter}.\hspace{\numexpr\value{algblockspace}+0 pt}]
			\setlength\itemindent{5pt}
			\item \hspace{-6pt} #1
		\end{enumerate} \stepcounter{algcounter}
	}\else{
		\begin{enumerate}[\arabic{algcounter}.\hspace{\numexpr\value{algblockspace}+0 pt}]
			\setlength\itemindent{0pt}
			\item \hspace{-6pt} #1
		\end{enumerate} \stepcounter{algcounter}
	}
	\fi
}
\newcommand{\Iid}{\mathcal{I}_{i,j}^d}
\newcommand{\Ijc}{\mathcal{I}_{j,i}^c}
\newcommand{\Ik}{\mathcal{I}_{k,k'}}
\newcommand{\Izeroid}{\widetilde{\mathcal{I}}_{i,j}^d}
\newcommand{\Izerojc}{\widetilde{\mathcal{I}}_{j,i}^c}
\newcommand{\eps}{\varepsilon}
\newcommand{\M}{\mathcal{M}}
\newcommand{\argmax}{\operatornamewithlimits{argmax}}
\newcommand{\pbold}{\mathbf{p}}
\newcommand{\gammatarget}{\widehat{\gamma}}
\newcommand{\npower}{\sigma^2}
\newtheorem{lemma}{Lemma}
\newtheorem{theorem}{Theorem}
\begin{document}
	\raggedbottom
	\allowdisplaybreaks
	\title{Performance  of UAV-assisted D2D Networks in the Finite Block-length Regime }
	\author{Mehdi~Monemi and Hina Tabassum, \textit{Senior Member}, IEEE
		\thanks{This research was supported by a Discovery Grant funded by the Natural Sciences and Engineering Research Council of Canada. M. Monemi was a Visiting Researcher at the Dept. of Electrical Engineering and Computer Science, York University, Canada and is currently with the Dept. of Electrical and Computer Engineering, Salman Farsi University of Kazerun, Kazerun, Iran (email: monemi@kazerunsfu.ac.ir). H. Tabassum is with the Dept. of Electrical Engineering and Computer Science, York University, Canada.}
	}

	\maketitle
	\begin{abstract}
		We develop a comprehensive framework to characterize and optimize the performance of a unmanned aerial vehicle (UAV)-assisted D2D network, where D2D transmissions underlay  cellular transmissions.  Different from conventional non-line-of-sight (NLoS) terrestrial transmissions, aerial transmissions are highly likely to experience line-of-sight (LoS).  As such, characterizing the performance of mixed aerial-terrestrial networks with accurate fading models is critical to precise network performance characterization and resource optimization. We first characterize closed-form expressions for a variety of performance metrics  such as  frame decoding error probability (referred to as reliability), outage probability, and ergodic capacity of users. The terrestrial and aerial transmissions may experience either LoS Rician fading or NLoS Nakagami-m fading with a certain probability. 
		Based on the derived expressions, we  formulate a hierarchical bi-objective mixed-integer-nonlinear-programming (MINLP) problem  to minimize the total transmit power of all users and maximize the aggregate throughput of D2D users subject to quality-of-service (QoS) measures (i.e., reliability and ergodic capacity) of cellular users. We model the proposed problem as a bi-partite one-to-many matching game. To solve this problem, we first obtain the optimal closed-form   power allocations for each D2D and cellular user on any possible subchannel, and then incorporate them to devise efficient subchannel and power allocation algorithms. Complexity analysis of the proposed algorithms is presented. Numerical results verify the accuracy of our derived expressions and  reveal the significance of aerial relays compared to ground relays in increasing the throughput of D2D pairs especially for distant D2D pairs.

	\end{abstract}
	\begin{IEEEkeywords}
		Unmanned aerial vehicle (UAV), device-to-device (D2D), reliability, Rician, Nakagami-m, outage, subchannel and power allocation.
	\end{IEEEkeywords}
	
	
	\thispagestyle{empty}
	
	\section{Introduction}
	
	Device-to-device (D2D) transmissions  offer ubiquitous and ultra-reliable connectivity to diverse device types in 5G and beyond 5G (B5G) networks \cite{hoyhtya2018review}. Among the two distinguished modes of D2D transmissions (i.e., overlay and underlay), the underlay mode where D2D transmissions coexist with the legacy cellular transmissions is relatively appealing to network operators due to its efficient spectrum utilization  
	\cite{osti2016performance}.
	Typically, D2D communication facilitates direct communication between physically nearby devices, without the intervention of a base station (BS).   To enable  D2D communications, 3GPP introduced LTE-Direct protocol (also known as  Proximity-based Services (ProSe)) in Release-12 \cite{3gpp20123rd}. Nevertheless, the applicability of ProSe remains limited in scenarios  when the devices are outside the network coverage, far from each other, experience poor channel conditions, or create severe interference to cellular users. To address this limitation, relay-assisted D2D functionality  has been introduced in 3GPP Release-13 \cite{3gpp13}. Subsequently, a large body of the  literature considered terrestrial relays with non-line-of-sight (NLoS) transmissions to instigate D2D communications \cite{sang2011survey,kumar2019survey}.

	Along another note, unmanned-aerial-vehicles (UAV) based communication has emerged as a potential technology  to complement terrestrial networks and will be an integral component of B5G wireless networks \cite{zhang2019survey,li2018uav,zhou2018beam,huo2019multi,khawaja2017uav}. \textcolor{black}{For example, in \cite{zhou2018beam}, the authors have proposed robust beam management and network self-healing mechanisms for millimeter wave (mm-wave) UAV mesh networks. In \cite{huo2019multi}, a mm-wave  distributed phased-arrays architecture and proof-of-concept (PoC) designs for mobile user equipment and UAVs were proposed. {Also, in \cite{khawaja2017uav}, the authors have studied the behaviour of mmWave air-to-ground channels by using two ray propagation model and employing ray tracing simulations for 28GHz and 60GHz frequencies.}} 
	
	
	Different from the terrestrial infrastructure experiencing mostly NLoS transmissions, UAVs can be deployed flexibly in three dimensions and can offer strong line-of-sight (LoS) connectivity.  
	Nevertheless, in emerging 5G/B5G mixed aerial and terrestrial networks, we encounter  {\it statistically distinct LoS or NLoS fading channels} and it is crucial to understand their impact on the  network performance metrics and resource allocation schemes. To this end, this paper answers the fundamental questions such as {(i)~how to  characterize key performance metrics such as  outage probability, ergodic capacity, and decoding error probability of users in a {UAV-assisted} D2D underlay network while accurately modeling mixed LoS/NLoS fading channels, (ii)~how to use the derived expressions  to efficiently manage network resources, and  (iii)~in which scenarios UAV-assisted D2D communications can be beneficial. }
	
	The impact of fading channel statistics on performance metrics  can be computed by averaging over the entire distributions of the fading random variables.
	This can be done in two ways, i.e., (i) by computing the optimal solutions  and resource allocations per fading channel realization. Then, solving the instantaneous optimization (resource allocation) problem for a large number of channel realizations and averaging over all channel realizations to compute the  optimal solutions and performance metrics, or (ii) characterizing the statistically averaged performance metrics and then using them to formulate and solve the required optimization (resource allocation) problems. The latter approach enables us (1) to capture the impact of the random fading channel statistics on important performance metrics without solving the resource allocation problem for each channel realization and  does not require averaging over all possible fading channel conditions, and (2)  perfect knowledge of rapidly varying channel state information (CSI) is not needed.

	\subsection{Background Work}
	
	To date, a plethora of research works considered {\it instantaneous optimization of the subchannel and power allocations} of D2D users   assuming either no fading  
	\cite{wang2018spectrum,ma2018full,monemi2016characterization}
	or Rayleigh fading 
	\cite{sekander2016decoupled,chen2018resource}.
	In \cite{wang2018spectrum}, a successive convex optimization based resource allocation scheme was proposed to enhance the performance of D2D transmissions  via UAV relays.  In \cite{ma2018full},  the authors studied relay-assisted D2D communications in mm-wave networks with full-duplex relays that are equipped with directional antennas and obtained a Pareto-optimal scheduling solution.
	In \cite{monemi2016characterization}, the authors characterized the feasible region of interference in a D2D underlaid multi-cell cellular network and then devised efficient and reliable resource management schemes for network users. 
	In \cite{sekander2016decoupled}, a two-sided resource allocation matching game is devised for decoupled uplink-downlink multi-tier full-duplex networks under Rayleigh fading.
	In \cite{chen2018resource},  a coalition game was established to maximize the sum-rate of underlaid D2D users in mm-wave networks. 
	
	Recently, few {\it instantaneous  optimization frameworks} were developed considering  Nakagami-m fading with shape factor $m$ for NLoS transmission links \cite{kusaladharma2017interference,she2019ultra ,li2019performance} and Rician fading with shape factor $K$ for LoS transmissions  
	\cite{swetha2017selective,mishra2019coverage,sboui2017energy}. For example, in \cite{kusaladharma2017interference}, by considering Nakagami-m fading channel, a user association and power control scheme was proposed for D2D underlaid cellular networks and then outage expressions have been derived. 
	Assuming that the 
	Shannon channel capacity is not reachable due to practical limitations, under the finite block-length regime, an ultra-reliable resource allocation scheme is devised
	in \cite{she2019ultra}  for UAV-enabled networks wherein the channel is supposed to be  Nakagami-m fading.
	In \cite{swetha2017selective}, an overlay/underlay mode selection and resource  allocation algorithm was presented for mm-wave D2D networks under Rician fading channels. A coverage constrained scheduling scheme for UAV-assisted heterogeneous network (HetNet) was proposed in \cite{mishra2019coverage} under Rician fading channels. In \cite{sboui2017energy}, an energy-efficient power allocation scheme for UAV cognitive radio networks was proposed with Rician fading channels.

	Nevertheless, in the aforementioned research works, the objective function and constraints are defined for each fading channel realization (which is assumed to be perfectly known) and, subsequently, the optimal solutions  and resource allocations are computed per channel realization. 
	The instantaneous optimization needs to be solved for a large number of channel realizations (assuming perfect CSI) to extract  performance as a function of fading statistics. 

	Recently,  a handful of research works exist that considered the outage probability expressions and ergodic rates in the design of resource allocation algorithms for HetNets \cite{della2017minimum, baek2017optimal, wang2018joint, mishra2019spectral,taniya}. For example, in \cite{della2017minimum}, by considering no interference in an overlay D2D network, a minimum power scheduling scheme under Rician fading  was proposed for full-duplex relay-assisted D2D communication. In \cite{baek2017optimal}, a throughput maximization problem with outage probability constraints was considered for UAV relay systems in the presence of Nakagami-m fading channels. In \cite{wang2018joint},  a matching-game based subchannel and power allocation algorithm was proposed for a throughput maximization problem  considering Rayleigh-faded NLoS signal and Rician faded LoS interference. In \cite{mishra2019spectral}, by deriving the outage of users under Nakagami-m and Rayleigh fading channels, deployment cost efficiency   was optimized in mm-wave networks.

	\subsection{Motivation and Contributions}

	We  develop  a  comprehensive  framework  to  characterize  and  optimize  the  performance  of  a  UAV-assisted D2D network considering Nakagami-fading for NLoS and Rician fading for LoS transmission.  Different  from  conventional NLoS  terrestrial  transmissions,  aerial  transmissions are highly likely to experience LoS. As such, characterizing the performance of mixed aerial-terrestrial networks with accurate fading models is critical to precise network performance  optimization. 
	To date, many research works rely on Rayleigh fading models for both the LoS and NLoS channels. While Rayleigh fading models are mathematically tractable, they are not accurate, especially for LoS modeling. Nakagami-m  and Rician fading  are considered as accurate\footnote{The models can capture a variety of channel conditions and empirical measurements. For example,  $m=1$ in Nakagami-m distribution  and $K=0$ in Rician distribution yield Rayleigh fading which is  commonly used  in  terrestrial wireless networks.}  for NLoS and LoS, respectively. 
	Evidently, statistical characterization of the performance  is even more challenging  due to the coexistence of LoS Rician and NLoS Nakagami-m fading channels in the desired links and/or interfering links (especially when the desired signal experiences Rician LoS and the interfering signal experiences Nakagami-m NLoS or vice versa). 
	
	Furthermore, statistical modeling of the average performance metrics (such as decoding error probability, outage probability, and ergodic capacity) and applying them  to devise efficient resource allocation schemes is another challenge. However, as mentioned earlier,  this approach does not require to compute  optimal solutions for each fading channel realizations (i.e., the optimization problem needs to be solved only once) and the instantaneous CSI is not needed.

	To this end, the contributions of this paper are as follows:
	\begin{itemize}
		\item We consider a UAV-assisted D2D underlaid cellular network. Each D2D pair either selects direct or UAV-assisted relay transmission and all D2D transmission channels are shared with cellular users. Both the aerial and terrestrial transmissions experience LoS Rician fading or NLoS Nakagami-m  fading depending on a specific criterion. We characterize the signal-to-interference (SIR) outage probability and ergodic capacity of cellular user transmissions, D2D transmissions, and UAV relay transmissions, considering Nakagami-m fading for NLoS and Rician fading for LoS channels.

		\item  Under the finite block-length quasi-static regime and considering both LoS and NLoS interference scenarios, we derive analytic expressions for the frame decoding error probability of cellular users, which is also referred to as {\it reliability} \cite{polyanskiy2010channel}.  We propose tight approximations to the decoding error probability and demonstrate through numerical results that the approximate expressions are in close agreement to the exact values. {The derived closed-form  expressions for reliability can guarantee that the cellular users can always decode their block of (n bit) data (with a given probability).
		} 
		
		\item Based on the derived expressions, we formulate a hierarchical bi-objective optimization problem as a mixed-integer-nonlinear-program (MINLP)
		{
			to minimize the total transmit power {(the aggregate transmit power of UAVs as well as  D2D and cellular users)} leading to maximum aggregate throughput of D2D users}, subject to QoS measures {(i.e., reliability and ergodic capacity)} of cellular users. 
		The formulated problem allocates the power of cellular users as well as the power, subchannel and link-type (direct or relayed) for D2D pairs.

		\item {We model the proposed problem as a bi-partite one-to-many matching game. In order to solve the problem, first we calculate the {\em optimal closed form power allocation} of each cellular user and D2D pair on each possible subchannel. Then, to assign the {\em subchannel and link-type for D2D pairs}, we first obtain the global optimal solution to the corresponding one-to-one matching game with no UAV relays, and {then we extend the corresponding procedure to achieve the solution for one-to-many matching game scheduling scheme.}}
		
		\item {Complexity analysis of the proposed algorithm is provided. Numerical results reveal that  by choosing optimum height for UAVs, the performance measures of the proposed resource management schemes can be enhanced to a great extent, specially for distant D2D pairs.}
		
	\end{itemize}
	
	The rest of the paper is organized as follows. In Section II, the system model and assumptions are presented. In Section III, we characterize the outage probability and channel capacity for cellular and D2D users. The frame decoding error probability of cellular users is characterized in Section IV. Resource allocation problem and solution are presented in Section V, and finally simulation results and conclusions are provided  in Sections VI and VII, respectively.
	
	\section{System Model and Assumptions}
	\label{sec:system_model}
	
	\begin{figure}
		\centering
		\includegraphics [width=254pt]{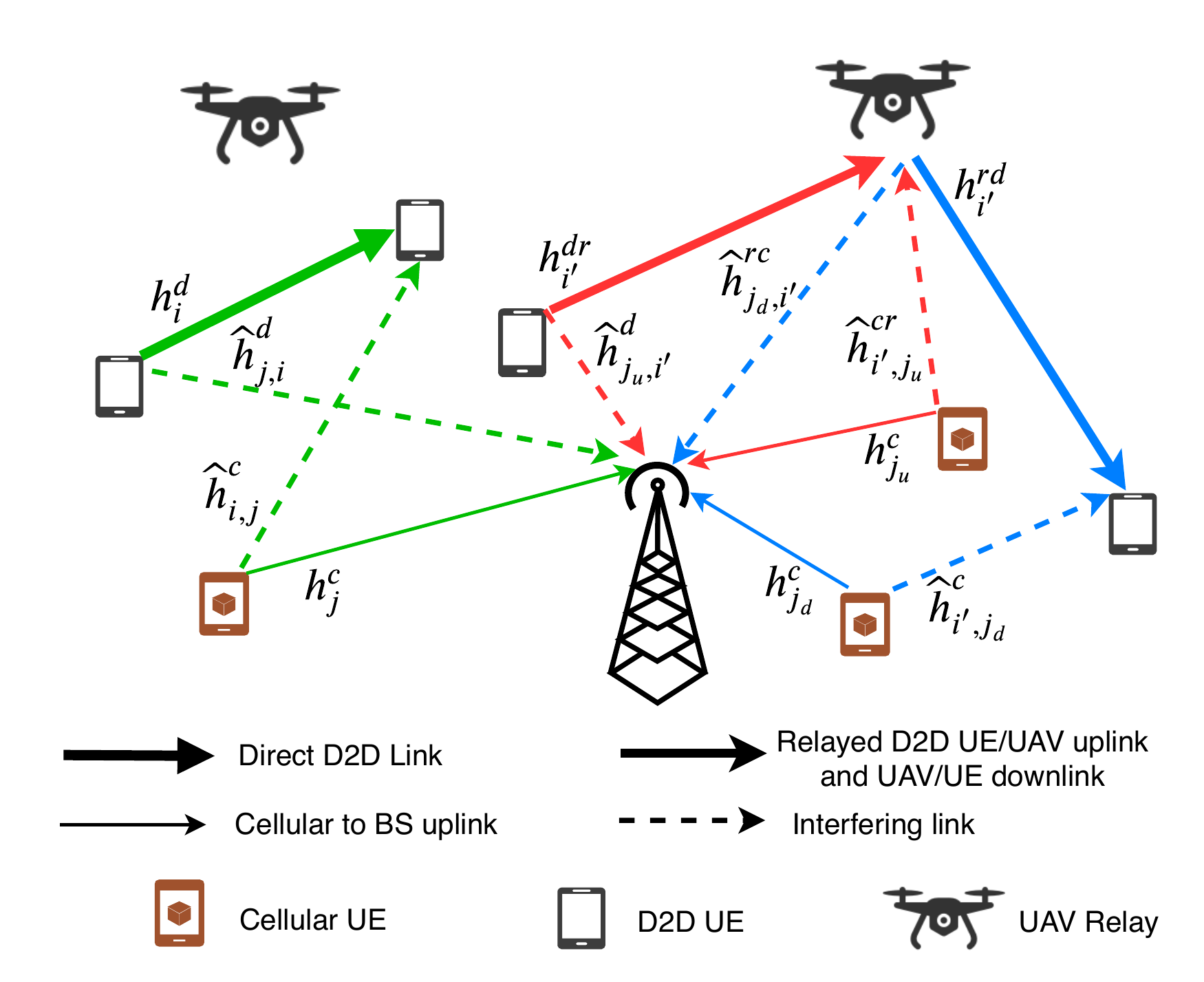} \\
		\caption{Graphical illustration of D2D underlaid cellular network consisting of terrestrial BS, several cellular users, direct D2D and relayed D2D pairs. The subchannel allocated to cellular user $j$ is reused by D2D pair $i$ and corresponding links are shown in green color. The relayed D2D pair $i'$  reuses the subchannels allocated to cellular users $j_u$ and $j_d$ in the uplink and downlink paths whose links are depicted in red and blue colors, respectively.
		} 
		\label{fig:structure}
	\end{figure}
	\subsection{Network Model}
	Consider a network consisting $M^c$ cellular users and $M^d$ D2D pairs and a set of one or more rotary-wing UAVs\footnote{Rotary-wing UAVs can hover over a certain location  to  ensure  continuous  coverage. 
		Rotor  blades  in the rotary-wing UAV work  exactly in the same  way  as  a  fixed  wing, however  constant  aircraft forward movement is not needed to produce airflow over the blades. 
		As such, a rotary-wing UAV can be considered as predeployed by the network operators for network performance assessment and resource allocation/optimization purposes.}
	with predefined spatial deployment. Let $\M^c$ be the set of cellular users whose minimum QoS requirements must always be guaranteed, and $\M^d$ be the set of D2D pairs who use available cellular resources, as long as minimum QoS requirements of all cellular users are preserved. D2D pairs  with strong channels may choose direct connection, while those with poor direct channel gain (i.e., those D2D pairs whose transmitters and receivers are rather far or potentially NLoS) may be served by the UAV relay. From now on, we will call the former and latter D2D pairs as \textit{direct} D2D and \textit{relayed} D2D pairs, respectively.  It is seen in Fig. \ref{fig:structure} that D2D pair $i\in\M^d$ and $i'\in\M^d$ have established transmission through direct and relayed  links, respectively.
	{The relayed transmissions use decode-and-forward (DF) relaying mode.}
	
	\subsection{Channel Model}
	Let $i\in\M^d$ be the direct D2D pair reusing the subchannel of cellular user $j\in\M^c$. Also, let $i'\in\M^d$ be the relayed D2D pair reusing the subchannels of the cellular users $j_u$ and $j_d$ in the uplink and downlink paths, respectively (see Fig.~\ref{fig:structure}). Let $h_i^d$ denotes the (power) channel gain between the transmitter and receiver of direct D2D pair $i\in\M^d$ and $h_j^c$ denotes the channel gain between the cellular user $j$ and ground BS. Also, let $h_{i'}^{dr}$ and $h_{i'}^{rd}$ denote, respectively, the relayed uplink and downlink channel gains for D2D pair $i'$. For all relayed D2D pairs, we consider that the uplink and downlink channels are orthogonal to each other and may be shared with the subchannels of two different cellular users.
	
	The interfering channel gain imposed from cellular user $j$ on the receiver of direct D2D pair $i$ is denoted by $\widehat{h}_{i,j}^c$, whereas that imposed from the transmitter of direct D2D pair  $i$ to the receiver of cellular user $j$ (i.e., BS) is denoted by $\widehat{h}_{j,i}^d$. { Also, for relayed D2D pair $i'$, the interference imposed from  cellular user $j_u$ on the UAV relay is denoted by $\widehat{h}^{cr}_{i',j_u}$ and the interference from the UAV relay to the receiver of cellular user $j_d$ (i.e., BS) is denoted by  $\widehat{h}^{rc}_{j_d,i'}$.} We model any channel gain $h$ (whether desired or interfering link) as follows:
	\begin{align}
	\label{eq:h_fading}
	&h = x \mathbb{E}[h]  \quad \mathrm{where} \\& \mathbb{E}[h] \mathrm{(dB)}=-\begin{cases}
	\mu_L+ 10\beta_L \log_{10} d,\ \ \ \ \textrm{ for LoS link},
	\notag\\
	\mu_N+ 10\beta_N \log_{10} d, \ \ \ \textrm{ for NLoS link},
	\end{cases}
	\end{align}
	where $\mathbb{E}[h]$ is the expected value of $h$ considering the slow fading coefficients (i.e., path-loss and shadowing), and $x$ is unit mean random variable reflecting the fast fading coefficient which might not be exactly estimated.	Also, $d$ is the distance of the link, $\mu_L$ and $\mu_N$ are frequency dependent components of the path loss, and $\beta_L$ and $\beta_N$ are the path loss exponents for the LoS and NLoS cases, respectively. We model the LoS and NLoS fading channels with Rician \footnote{The approximation of Rician fading with Nakagami-m fading   is  generally  not  accurate for LoS channels \cite{annamalai2001simple},  and  thus in this work  we  have  employed accurate Rician fading model for LoS channels.} distribution  having shape factor $K$ and Nakagami-m distribution   with shape factor $m$, respectively, as follows:
	\begin{align}
	\label{eq:Rician_pdf}
	f_R(x,K)&=(K+1)e^{-K-(K+1)x} I_0(\sqrt{4K(K+1)x}), \\
	\label{eq:Nakagami_pdf}
	f_G(x,m)&=\frac{m^m x^{m-1}}{\Gamma(m)} e^{-mx},
	\end{align} {where $I_0(.)$ and $\Gamma(.)$ are modified Bessel and gamma  functions respectively}.
	The fading models and notations considered for different aerial and terrestrial  links are listed in {\bf Table~1}. The LoS probability for aerial links is given by \cite{al2014optimal,mozaffari2016unmanned}:
	\begin{align}
	\label{eq:p_LoS_per_teta}
	p_L(\theta)=\frac {1}{1+C \mathrm{exp} (-B [ \theta -C ])},
	\end{align}
	where $\theta$ is the angle between UAV and ground node, and $B$ and $C$ are constants related to environment. 
	%
	For transmissions between ground devices (i.e., transmission between each D2D transmitter and receiver, and interfering links from a cellular user to D2D receivers), we consider the 3GPP low altitude LoS probability model for urban environments as
	\begin{align}
	\label{eq:p_LoS_per_d}
	p_L(d)=	\min \left(\frac{d_1}{d}, 1\right)\left(1-\exp\left(-\frac{d}{d_2}\right)\right)  + \exp\left(-\frac{d}{d_2}\right),
	\end{align}
	where $d_1=18$m and $d_2=63$m  \cite{access2016further}.
	
	\begin{table*}
		\centering
		\caption{Fading models and notations considered for different transmission channels}
		\label{tbl:fadings}
		\begin{tabular}{|c|c|c|c|c|}
			\hline
			\multicolumn{2}{|c|}{}                                                                                    & \textbf{
				\begin{tabular}[c]{@{}c@{}}
					Desired Channel \\ Fading and Notation\end{tabular}}

			& \textbf{\begin{tabular}[c]{@{}c@{}}
					Interference Channel  \\ Fading and Notation\end{tabular}} \\ \hline
			\multicolumn{2}{|c|}{\textbf{$j$-th Cellular User}}                                                              & Nakagami-m, $h^c_j$                                                           & Nakagami-m/Rician , $\widehat{h}_{i,j}^c$                                                        \\ \hline
			\multicolumn{2}{|c|}{\textbf{$i$-th Direct D2D Pair}}                                                            & Nakagami-m/Rician, $h_i^d$                                                    & Nakagami-m, $\widehat{h}_{j,i}^d$                                                         \\ \hline
			\multirow{2}{*}{\textbf{\begin{tabular}[c]{@{}c@{}}UAV Relayed \\$i^\prime$-th D2D Pair\end{tabular}}} & \textbf{Uplink}   & Nakagami-m/Rician, $h_{i'}^{dr}$                                                               & Nakagami-m/Rician, $\widehat{h}_{i',j_u}^{cr}$                                                                    \\ \cline{2-4} 
			& \textbf{Downlink},  & Nakagami-m/Rician,  $h_{i'}^{rd}$                                                               & Nakagami-m/Rician, $\widehat{h}_{j_d,i'}^{rc}$                                                     \\ \hline
		\end{tabular}
	\end{table*}

	\subsection{SINR Model for Direct Links}
	\textcolor{black} {Consider that direct D2D pair $i\in\M^d$ and cellular user $j\in\M^c$ are allocated same subchannel. The SINR at the D2D receiver is expressed as follows:
		\begin{align}
		\label{eq:gamma}
		\gamma_i^d =
		\dfrac{h^d_i p^d_i}{\Iid + \widehat{h}^c_{i,j} p^c_j} ,
		\end{align}
		where $p_i^d$ and $p_j^c$ are the transmit powers of  D2D pair $i$ and cellular user $j$ respectively,  $\widehat{h}^c_{i,j} p^c_j$ is the dominant co-channel intracell interference imposed from cellular user $j$ on D2D pair $i$, and
		$\Iid=\Izeroid +\npower$, in which $\npower$ is the noise power, and $\Izeroid$ is the interference imposed on D2D pair $i$ from all sources other than cellular user $j$ (which mainly results from the co-channel intercell interference due to frequency reuse in non-adjacent cells/sectors). Similarly, the SINR of cellular user $j$ is expressed as
		\begin{align}
		\label{eq:gamma_j}
		\gamma_j^c =
		\dfrac{h^c_j p^c_j}{\Ijc +  \widehat{h}^d_{j,i} p^d_i },
		\end{align}
		where $\widehat{h}^d_{j,i} p^d_i$ is the dominant co-channel intracell interference imposed from D2D pair $i$ on cellular user $j$, and $\Ijc=\Izerojc +\npower$, in which $\Izerojc$ is the interference imposed on cellular user $j$ from all sources other than D2D pair $i$.
	} Note that the cellular user transmission can be interfered by either (i) the D2D transmitter in direct D2D relaying or (ii) either D2D transmitter or UAV relay in the relaying D2D link, and thus, as seen in Fig. \ref{fig:structure} there is only interfering link for each subchannel at a time. Similarly, the D2D receiver or UAV relay will receive interference from a given cellular user transmission. Let $k$ be some (D2D or cellular) user and $k'$ be its corresponding interfering user (if exists any). This way, if $k\in\M^d$, then we have $k'\in\M^c$, and if $k\in \M^c$, then we have $k'\in \M^d$. Now, in order to make notations simpler, we combine the SINRs in \eqref{eq:gamma} (and also any other possible combinations for the relayed D2D pairs) as follows:
	\begin{align}
	\label{eq:gamma_k}
	\gamma_k =
	\dfrac{h_k p_k}{\textcolor{black}{\Ik} +   \widehat{h}_{k,k'} p_{k'}} ,
	\end{align}
	where $\gamma_k$ can be the SINR of any of the transmissions with D2D pair or cellular user, $h_k$ is the main channel gain of user $k$ and $\widehat{h}_{k,k'}$ corresponds to the interfering channel gain imposed from user $k'$ to user $k$, and finally $p_k$ and $p_{k'}$ are transmit power levels of users $k$ and $k'$ respectively. For example, if $k=i\in \M^d$ is a direct D2D pair interfering with cellular user $j$, we have $h_k$=$h^d_i$, and $\widehat{h}_{k,k'}=\widehat{h}^c_{i,j}$, and if $k=j\in \M^c$ is a cellular user 
	, we have  $h_k$=$h^c_j$, and $\widehat{h}_{k,k'}=\widehat{h}^d_{j,i}$.
	
	\subsection{SINR Model for Relayed Links}
	The relayed transmissions use DF relaying mode. If D2D pair $i'\in\M^d$ establishes a relayed link, the  end-to-end SINR is the minimum of the SINR of the uplink and downlink \cite{riihonen2011hybrid}, i.e.,
	\begin{align}
	\label{eq:relay_SINR1}
	\gamma_{i'}^{d,rel} &=
	\textrm{min}  \left\{ \frac{h_{i'}^{dr} p_{i'}^{dr}} {\textcolor{black}{\mathcal{I}^d_{i',j_u}} + \widehat{h}_{i',j_u}^{cr} p_{j_u}^c } , \frac{h_{i'}^{rd} p_{i'}^{rd}}{\textcolor{black}{\mathcal{I}^d_{i',j_d}} + \widehat{h}_{i',j_d}^c 	p_{j_d}^c  } \right\}.
	\end{align}
	where $j_u$ and $j_d$ are the cellular users whose subchannels are reused by relayed D2D pair $i'$ in the uplink and downlink paths respectively, $p_{i'}^{dr}$ is the transmit power of D2D transmitter to the relay, and $p_{i'}^{rd}$ is the transmit power of relay to the receiver of D2D pair $i'$. For brevity, we express all SINRs in the form of \eqref{eq:gamma_k}, and for some relayed D2D pair $k$, we rewrite \eqref{eq:relay_SINR1} as $\gamma_{k}^{d,rel} =
	\textrm{min}  \left\{ \gamma_{k_u}, \gamma_{k_d} \right\},
	$	where $\gamma_{k_u}$ and $\gamma_{k_d}$ are the uplink and downlink SINRs which can be expressed from \eqref{eq:gamma_k} as $
	\gamma_{k_u} =
	\frac{h_{k_u} p_{k_u}}{\textcolor{black}{ \mathcal{I}_{k_u,k'_u}}  +   \widehat{h}_{k_u, k'_u} p_{k'_u}} ,
	\gamma_{k_d} =
	\frac{h_{k_d} p_{k_d}}{\textcolor{black}{ \mathcal{I}_{k_d,k'_d}} +   \widehat{h}_{k_d, k'_d} p_{k'_d}}.
	$
	{For brevity, in the following sections, the notation $i$ may stand for either direct or relayed D2D pair.}
	


	\section{Expressions of Outage Probability and Channel Capacity}
	
	We  characterize the outage probability for cellular transmissions, direct D2D, and relayed D2D transmissions considering interference limited regime. The expressions will be used to derive the ergodic capacity  and  the average frame decoding error probability  of the cellular users.

	\subsection{Outage Probability Analysis for a Typical Link}
	Before deriving the outage probability, {we ensure that the interference-limited assumption is always valid, i.e., we ensure that the interference in (\ref{eq:gamma_k}) is much higher than noise power, which leads to the inequality $\underline{p}_{k',k} < p_{k'}$} where
	\begin{align}
	\label{eq:interference_limited_dir_k}
	\underline{p}_{k',k}= {\textcolor{black}{\mathcal{I}_{k,k'}} \widetilde{K} }/ {\mathbb{E}\{\widehat{h}_{k,k'}\}},
	\end{align}
	in which $\widetilde{K} \gg 1$ is a constant. Therefore, the following  outage probability derivations are valid when this inequality holds. For example, if direct D2D pair $i\in\M^d$ and  cellular user $j\in \M^c$ share same subchannel,  \eqref{eq:gamma} and \eqref{eq:interference_limited_dir_k} correspond to $p^d_i>\underline{p}^d_{i,j}$ and $p_j^c>\underline{p}^c_{j,i}$ where
	$\underline{p}^d_{i,j}= {\npower \widetilde{K} }/ {\mathbb{E}\{\widehat{h}^d_{j,i}\}}$ and $
	\underline{p}^c_{j,i}= {\npower \widetilde{K} }/ {\mathbb{E}\{\widehat{h}^c_{i,j}\}}$.
	Similarly, for relayed D2D pairs, if $i\in\M^d$ reuses the subchannel of $j_u\in\M^c$ for the uplink and that of $j_d\in\M^c$ for the downlink, we must have $p_i^{dr}>\underline{p}_{i,j_u}^{dr}$, $p_{j_u}^c>\underline{p}_{j_u,i}^{cr}$, $p_i^{rd}>\underline{p}_{i,j_d}^{rd}$, and $p_{j_d}^c>\underline{p}_{j_d,i}^c$ where
	$\underline{p}_{i,j_u}^{dr}= {\textcolor{black}{\mathcal{I}_{j_u,i}} \widetilde{K} }/ {\mathbb{E}\{\widehat{h}^d_{j_u,i}\}} $, $
	\underline{p}_{j_u,i}^{cr}= {\textcolor{black}{\mathcal{I}_{i,j_u}} \widetilde{K} }/ {\mathbb{E}\{\widehat{h}^{cr}_{i,j_u}\}}$, $
	\underline{p}_{i,j_d}^{rd}= {\textcolor{black}{\mathcal{I}_{j_d,i}} \widetilde{K} }/ {\mathbb{E}\{\widehat{h}_{j_d,i}^{rc}\}},$ and $
	\underline{p}^c_{j_d,i}= {\textcolor{black}{\mathcal{I}_{i,j_d}} \widetilde{K} }/ {\mathbb{E}\{\widehat{h}_{i,j_d}^c\}}$. 
	Let $y$ and $x$ denote respectively the unit mean fading variables of the main  and interfering links $k$ and $k'$. The \textit{outage probability function} denoted by $O(\alpha)$ is expressed as 
	\small
	\begin{subequations}
		\label{eq:o_dir1}
		\begin{numcases}{
			O(\alpha) = \mathrm{Pr}\{y < \alpha  x \}= 
		}
		\label{eq:o_dir1_LoS_LoS}
		\!\!
		O_{L,L}(\alpha), \ \  \textrm{if $k$ LoS, $k'$ LoS,  }
		\\ \!\!
		\label{eq:o_dir1_LoS_NLoS}
		O_{L,N}(\alpha),\   \textrm{if $k$ LoS, $k'$ NLoS,\ \ \ }
		\\ \!\!
		\label{eq:o_dir1_NLoS_LoS}
		O_{N,L}(\alpha),\  \textrm{if $k$ NLoS, $k'$ LoS, \ \ \ }
		\\ \!\!
		\label{eq:o_dir1_NLoS_NLoS}
		O_{N,N}(\alpha), \ \textrm{if $k$ NLoS, $k'$ NLoS, \ \ \ \ \ \ }
		\end{numcases}
	\end{subequations}  
	\normalsize
	in which the outage representations of $O_{L,L}(\alpha)$, $O_{L,N}(\alpha)$ , $O_{N,L}(\alpha)$, and $O_{N,N}(\alpha)$ are studied later in this section. The outage probability value of user $k$ is then obtained as
	\begin{align}
	\mathrm{Pr}\{\gamma_k < \gammatarget_k \} = \mathrm{Pr}\{y < \zeta_{k,k'} \gammatarget_k   x \} 
	=
	O(\alpha)\big|_{\alpha=\zeta_{k,k'} \gammatarget_k}
	\end{align}
	where $\gammatarget_k$ is target-SINR of user $k$, and
	\begin{align}
	\label{eq:alpha_k_dir1}
	\zeta_{k,k'}=  p_{k'}  \mathbb{E}[\widehat{h}_{k,k'}]   /  {p_k \mathbb{E}[h_{k}]}.
	\end{align}
	If the link relating to the main user $k$ and that relating to the interfering user $k'$ are both LoS with Rician shape factors $K$ and $K'$, the outage probability expressed in \eqref{eq:o_dir1_LoS_LoS} can be obtained as follows
	\cite{khoolenjani2009ratio}:
	\begin{figure*}[!b]
	\hrulefill
	\begin{align}
	\label{eq:o_dir1_LoS_LoS_output}
	O_{L,L}(\alpha)= 1-\left\{  \frac{ (K'+1)e^{-K-K'}   }{\alpha (K+1) + (K'+1) }
	\sum_{m=0}^{\infty} \frac{K^m}{m!} \left[ \sum_{n=0}^{\infty} \frac{1}{n!} \left( \frac{\sqrt{K'(K'+1)}}{\alpha(K+1)  + (K'+1)}  \right)^{2n}
    \!\!\!\!
	\:_2{F_1}(-n,-n,m+1;\frac{ \alpha K}{2K'(K'+1)} \right]
	\right\},
	\end{align}
	\end{figure*}
	where $_2{F_1}(\cdot)$ is the Gauss's hypergeometric function.
	To obtain  closed-form expressions for (\ref{eq:o_dir1}b-\ref{eq:o_dir1}d), we consider the shape factor of Nakagami-m fading to be integer.
	\newcommand{\KP}{(K+1)}
	\begin{lemma}
		\label{lm:outage_L_N}
		For the case when the links of users $k$ and $k'$ are LoS and NLoS with shape factors $K$ and $m$ respectively, the outage  probability expression in \eqref{eq:o_dir1_LoS_NLoS} is obtained as
		\begin{multline}
		\label{eq:outage_L_N}
		O_{L,N}({\alpha})= 1- \frac{{m}^{m} e^{-K}}{\Gamma({m}) \left( {m}+\alpha (K+1) \right)^{{m}}} \times 
		\\
		\frac{d^{{m}-1}}{d \theta^{{m}-1}} \left[ \frac{\theta^{{m}-1}}{1-\theta} ( e^K - \theta e^{K\theta})\right]_{\theta=\frac{\alpha (K+1)}{{m}+\alpha (K+1)}}.
		\end{multline}
		\begin{proof}
			See \textbf{Appendix A}
		\end{proof}
	\end{lemma}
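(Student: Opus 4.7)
The plan is to write the complementary outage as a double expectation over the unit mean Rician variable $y$ (main link, factor $K$) and the unit mean Nakagami-$m$ variable $x$ (interferer, integer $m$), and then reduce the resulting double sum to the closed form via a generating-function/differentiation trick that exploits integrality of $m$. Concretely,
\begin{equation*}
1-O_{L,N}(\alpha)=\int_{0}^{\infty}\! f_G(u,m)\,\bigl[1-F_R(\alpha u,K)\bigr]\,du.
\end{equation*}

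First I would expand the modified Bessel function in \eqref{eq:Rician_pdf} as $I_0(z)=\sum_{n\ge 0}(z/2)^{2n}/(n!)^2$, integrate term by term, and apply the identity $\Gamma(n+1,z)=n!\,e^{-z}\sum_{k=0}^{n}z^{k}/k!$ (valid because the Rician-series index $n$ is a nonnegative integer) to get
\begin{equation*}
1-F_R(\alpha u,K)=e^{-K}e^{-(K+1)\alpha u}\sum_{n=0}^{\infty}\frac{K^{n}}{n!}\sum_{k=0}^{n}\frac{\bigl((K+1)\alpha u\bigr)^{k}}{k!}.
\end{equation*}
Substituting this into the outer integral against $f_G(u,m)$ produces a gamma integral in $u$ that evaluates to $\Gamma(m+k)/(m+\alpha(K+1))^{m+k}$. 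After pulling out $(1-\theta)^{m}=m^m/(m+\alpha(K+1))^m$ with $\theta:=\alpha(K+1)/(m+\alpha(K+1))$, what remains is
\begin{equation*}
1-O_{L,N}(\alpha)=\frac{m^m e^{-K}}{\Gamma(m)\,(m+\alpha(K+1))^{m}}\sum_{n=0}^{\infty}\frac{K^{n}}{n!}\sum_{k=0}^{n}\frac{\Gamma(m+k)}{k!}\,\theta^{k}.
\end{equation*}

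The main obstacle is collapsing the truncated inner sum into something independent of $n$ so the outer sum can be evaluated. Here I would use the integrality of $m$: since $\frac{d^{m-1}}{d\theta^{m-1}}\theta^{k+m-1}=\Gamma(k+m)/k!\,\cdot\theta^{k}$, one has
\begin{equation*}
\sum_{k=0}^{n}\frac{\Gamma(m+k)}{k!}\theta^{k}=\frac{d^{m-1}}{d\theta^{m-1}}\!\left[\theta^{m-1}\sum_{k=0}^{n}\theta^{k}\right]=\frac{d^{m-1}}{d\theta^{m-1}}\!\left[\frac{\theta^{m-1}(1-\theta^{n+1})}{1-\theta}\right].
\end{equation*}

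Finally, because the differential operator is linear, I would interchange it with the series in $n$ (convergence is routine since $0<\theta<1$) and evaluate the remaining scalar sum $\sum_{n\ge 0}\frac{K^{n}}{n!}(1-\theta^{n+1})=e^{K}-\theta e^{K\theta}$, which produces the bracketed expression in \eqref{eq:outage_L_N}. Substituting $\theta=\alpha(K+1)/(m+\alpha(K+1))$ into the resulting $(m-1)$-th derivative yields the claimed identity, completing the proof.
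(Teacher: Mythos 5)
Your proposal is correct and follows essentially the same route as the paper's Appendix A: expanding $I_0$ as a power series, reducing the Rician tail to a truncated exponential sum via the integer-order incomplete gamma identity, evaluating the Gamma integral over the Nakagami variable, and collapsing the double sum with the $\frac{d^{m-1}}{d\theta^{m-1}}$ generating-function trick and the identity $\sum_{n\ge 0}\frac{K^n}{n!}(1-\theta^{n+1})=e^K-\theta e^{K\theta}$. The only cosmetic difference is that you work with the complementary CDF $1-F_R$ from the start, whereas the paper integrates the joint density over $\{y<\alpha x\}$ directly.
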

	Once $O_{L,N}(\alpha)$ is obtained through \eqref{eq:outage_L_N}, $O_{N,L}(\alpha)$ can be obtained from the following.
	\begin{lemma}
		For the case when the links of users $k$ and $k'$ are NLoS and LoS with shape factors $m$ and $K$ respectively, the outage probability expression in \eqref{eq:o_dir1_NLoS_LoS} is obtained as follows:
		\begin{align}
		\label{eq:outage_N_L}
		O_{N,L}(\alpha) =1 - O_{L,N}(1/\alpha).
		\end{align}
	\end{lemma}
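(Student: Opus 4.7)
The plan is to observe that Lemma 2 is essentially a symmetry identity following directly from swapping the roles of the desired and interfering fading variables in the definition of $O(\alpha)$. The only substantive ingredient is absolute continuity of the two distributions, which kills the equality case.

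First I would set up clean notation. Let $y_L$ be a Rician-distributed random variable with shape factor $K$ (modeling unit-mean LoS fading, with density $f_R(\cdot, K)$ from \eqref{eq:Rician_pdf}), and let $y_N$ be a Nakagami-m-distributed random variable with shape factor $m$ (modeling unit-mean NLoS fading, with density $f_G(\cdot, m)$ from \eqref{eq:Nakagami_pdf}), and assume $y_L, y_N$ are independent. With the convention of \eqref{eq:o_dir1} that $y$ denotes the desired-link fading and $x$ the interfering-link fading, the two quantities of interest become
\begin{equation*}
O_{L,N}(\alpha)=\Pr\{y_L<\alpha\, y_N\},\qquad O_{N,L}(\alpha)=\Pr\{y_N<\alpha\, y_L\}.
\end{equation*}

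Next I would substitute $\alpha\mapsto 1/\alpha$ and rearrange the inequality inside the probability:
\begin{equation*}
O_{L,N}(1/\alpha)=\Pr\{y_L<y_N/\alpha\}=\Pr\{\alpha\, y_L<y_N\}.
\end{equation*}
The key step is then to invoke absolute continuity: because $y_L$ and $y_N$ have densities on $(0,\infty)$ and are independent, $\Pr\{y_N=\alpha\, y_L\}=0$, so the strict inequality $\{\alpha y_L<y_N\}$ is the exact complement (up to a null set) of $\{y_N<\alpha y_L\}$. Therefore
\begin{equation*}
O_{L,N}(1/\alpha)=1-\Pr\{y_N<\alpha\, y_L\}=1-O_{N,L}(\alpha),
\end{equation*}
which rearranges to \eqref{eq:outage_N_L}.

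There is not really a hard step here: the identity is essentially a tautology once one notices that Lemma 1 fixes the joint distribution (up to labeling) and $O_{L,N}$ and $O_{N,L}$ simply correspond to the two orientations of the ratio $y_L/y_N$. The only thing to be careful about is not to confuse the two different roles that the Rician and Nakagami-m variables play as ``desired'' versus ``interfering'' link — writing $y_L$ and $y_N$ with fixed distributional meaning (rather than reusing $y$ and $x$ across the two cases) avoids this pitfall and makes the one-line argument transparent.
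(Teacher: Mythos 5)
Your argument is correct and is exactly the reasoning the paper implicitly relies on: the paper states this lemma without an explicit proof, treating it as immediate from the definition $O(\alpha)=\Pr\{y<\alpha x\}$ by swapping the roles of the desired and interfering fading variables and using complementation. Your careful note that $\Pr\{y_N=\alpha\, y_L\}=0$ (absolute continuity of the Rician and Nakagami-m distributions) is the right way to justify discarding the equality event, so the proposal fills in the omitted one-line proof faithfully.
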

	Finally, we derive the outage probability of $O_{N,N}(\alpha)$ as follows.
	\begin{lemma}
		\label{lm:outage_N_N}
		Let the links of users $k$ and $k'$ both be NLoS with Nakagami-m shape factors $m$ and $m'$ respectively. The outage probability expression in \eqref{eq:o_dir1_NLoS_NLoS} is obtained as
		\begin{align}
		\label{eq:outage_N_N}
		O_{N,N}(\alpha)=
		1 - \frac{m'^{m'}}{\Gamma(m')} \sum_{k=0}^{m-1} \frac{(m \alpha)^k \Gamma(k+m')}{k! (m' +m \alpha)^{k+m'}}.
		\end{align}
	\end{lemma}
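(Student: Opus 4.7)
The plan is to compute $O_{N,N}(\alpha)=\Pr\{y<\alpha x\}$ directly by conditioning on the interferer $x$ and then integrating out. Since both fading variables are unit-mean Nakagami-m with integer shape factors, I would use the standard fact that the CDF of a $\text{Nakagami-}m$ random variable $y$ with unit mean admits a finite-sum representation for integer $m$, obtained by repeatedly integrating $f_G(y,m)=\frac{m^m y^{m-1}}{\Gamma(m)}e^{-my}$ by parts (equivalently, it is the Erlang CDF):
\begin{equation*}
F_y(t) \;=\; 1 - e^{-mt}\sum_{k=0}^{m-1}\frac{(mt)^k}{k!}.
\end{equation*}

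Next, I would write
\begin{equation*}
O_{N,N}(\alpha) \;=\; \int_{0}^{\infty} F_y(\alpha x)\,f_G(x,m')\,dx,
\end{equation*}
substitute the above closed form for $F_y$ and the density $f_G(x,m')=\frac{m'^{m'}x^{m'-1}}{\Gamma(m')}e^{-m'x}$, and split the integral into a ``$1$'' part (which trivially yields $1$) and a sum of the form
\begin{equation*}
\frac{m'^{m'}}{\Gamma(m')}\sum_{k=0}^{m-1}\frac{(m\alpha)^k}{k!}\int_{0}^{\infty} x^{k+m'-1} e^{-(m'+m\alpha)x}\,dx.
\end{equation*}
The inner integral is a standard Gamma integral equal to $\Gamma(k+m')/(m'+m\alpha)^{k+m'}$, so that swapping the finite sum with the integral (which is justified term by term) immediately gives the claimed expression.

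The key steps in order are therefore: (i) invoke the finite-sum Erlang-type CDF of a Nakagami-m random variable with integer $m$; (ii) apply the tower property $\Pr\{y<\alpha x\}=\mathbb{E}_x[F_y(\alpha x)]$; (iii) expand, exchange sum and integral, and evaluate each term via $\int_0^\infty x^{k+m'-1}e^{-\lambda x}dx=\Gamma(k+m')/\lambda^{k+m'}$ with $\lambda=m'+m\alpha$; (iv) collect the terms into the stated closed form. The only non-routine point is step~(i), where integrality of $m$ is essential for obtaining a finite sum rather than an incomplete Gamma function, which is why the preceding paragraph of the paper restricts to integer Nakagami-m shape factors. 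No other real obstacle is expected; the rest of the derivation is routine and parallels (indeed, it is the limiting/simpler analog of) the calculation used in the proof of Lemma~\ref{lm:outage_L_N}.
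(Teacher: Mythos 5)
Your proposal is correct and follows essentially the same route as the paper: the paper writes the same double integral, evaluates the inner $y$-integral via the antiderivative identity $\int e^{-cy}y^{m-1}dy=-e^{-cy}\frac{\Gamma(m)}{c^{m}}\sum_{k=0}^{m-1}\frac{(cy)^{k}}{k!}$ (which is exactly your finite-sum Erlang CDF for integer $m$), and then applies the Gamma integral $\int_{0}^{\infty}e^{-cx}x^{m-1}dx=\Gamma(m)/c^{m}$ term by term, just as you do. The only cosmetic difference is that you invoke the Erlang CDF as a known fact while the paper derives it inline; no gap either way.
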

	\newcommand{\KK}{\frac{m^{m} (K+1) }{\Gamma(m)e^{K}}}
	\newcommand{\KKKK}{\frac{(K+1)}{e^{K}}}
	\begin{proof}
		See \textbf{Appendix B.}
	\end{proof}
	\subsubsection{Outage Probability for Cellular Users}
	As discussed before, the desired links between the cellular users and their serving BS and interfering links from D2D transmitters to the cellular BS follow Nakagami-m fading channels. However, for relayed D2D links, the interfering link from UAV to cellular BS may follow LoS Rician fading, therefore in general, the outage probability of a cellular user $j$ is obtained as $O^c(\alpha)\big|_{\alpha=\zeta_{j,i} \gammatarget_j}$ where
	\begin{subequations}
		\label{eq:O_cell}
		\begin{numcases}{O^c(\alpha)=}
		\label{eq:O_c_L}
		O_{N,L}(\alpha),   &\textrm{for LoS interference,}
		\\
		\label{eq:O_c_N}
		O_{N,N}(\alpha),  &\textrm{for NLoS interference.}
		\end{numcases}
	\end{subequations}
	
	\subsubsection{Outage Probability for Direct D2D Pairs}
	Suppose some direct D2D pair $i$ reuses the subchannel of cellular user $j$.
	Depending on the distance between D2D transmitter and receiver and also the distance between the D2D receiver and the corresponding interfering cellular user, there exists positive probability values for LoS and NLoS links for each of the desired and interfering links. Therefore, the outage in \eqref{eq:o_dir1} may have any of the four possible expressions, and thus the outage value is obtained by $O^{d,dir}(\alpha)\big|_{\alpha=\zeta_{i,j}\widehat{\gamma}_i}$, where $O^{d,dir}(\alpha)$ is obtained by 
	\begin{subequations}
		\label{eq:o_d_dir1}
		\begin{numcases}{O^{d,dir}(\alpha)= }
		O_{L,L}(\alpha), &\textrm{if $i$ LoS, $j$ LoS,}
		\\ 
		\label{eq:o_D_dir1_LoS_NLoS}
		O_{L,N}(\alpha), &\textrm{if $i$ LoS, $j$ NLoS,}
		\\ 
		\label{eq:o_d_dir1_NLoS_LoS}
		O_{N,L}(\alpha),  &\textrm{if $i$ NLoS, $j$ LoS,}
		\\ 
		\label{eq:o_d_dir1_NLoS_NLoS}
		O_{N,N}(\alpha),   &\textrm{if $i$ NLoS, $j$ NLoS.}
		\end{numcases}
	\end{subequations}
	
	\subsubsection{Outage Probability for Relayed D2D Pairs}
	Now consider a relay-assisted D2D link. Depending on the the relative positions of the UAV, cellular BS, and the transmitter and receiver of the D2D pair,  the desired and interfering signals in both the uplink and downlink  may follow LoS or NLoS fading. The following lemma can be easily verified. 
	\begin{lemma}
		Consider the relayed D2D pair $i\in\M^d$ where UAV receives data from the D2D transmitter on the uplink subchannel shared with cellular user $j_u\in\M^c$ and relays data to the corresponding D2D receiver on the downlink subchannel shared with cellular user $j_d\in\M^c$. The outage probability can thus be  obtained as $O^{d,rel}(\alpha_u,\alpha_d)\big|_{\alpha_u=\zeta_{i,j_u} \gammatarget_i,\alpha_d=\zeta_{i,j_d}\gammatarget_i}$ where
		\begin{align}
		\label{eq:o_rel}
		O^{d,rel}(\alpha_u,\alpha_d)=
		1-\left(1-O(\alpha_u)\right)\left(1-O(\alpha_d)\right),
		\end{align}
		in which $O(\alpha_u)$ and $O(\alpha_d)$ correspond to the outage probability functions of the uplink and downlink, respectively, obtained from \eqref{eq:o_dir1}. Also, $\zeta_{i,j_u}$ and $\zeta_{i,j_d}$ are obtained from 
		\eqref{eq:alpha_k_dir1} as 
		$
		\zeta_{i,j_u}={p_{j_u} \mathbb{E}[\widehat{h}_{i,j_u}^{cr}]}   /  {p_i^{dr} \mathbb{E}[h_i^{dr}]},
		\zeta_{i,j_d}={p_{j_d} \mathbb{E}[\widehat{h}_{i,j_d}^c]}   /  {p_i^{rd} \mathbb{E}[h_i^{rd}]}.
		$
		Depending on the fading in the desired and interfering link for uplink and downlink,  $O(\zeta_{i,j_u}\gammatarget_i)$ and $O(\zeta_{i,j_d}\gammatarget_i)$ in \eqref{eq:o_rel} can follow from any one of \eqref{eq:o_dir1_LoS_LoS}, \eqref{eq:o_dir1_LoS_NLoS}, \eqref{eq:o_dir1_NLoS_LoS}, or \eqref{eq:o_dir1_NLoS_NLoS}.
	\end{lemma}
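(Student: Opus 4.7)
The plan is to verify the product form for the non-outage probability by combining three ingredients: the definition of the end-to-end SINR for DF relaying, the independence of uplink and downlink channels, and the translation from SINR thresholds to the scaled random-variable inequalities underlying the function $O(\cdot)$.

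First, I would start from the end-to-end SINR expression $\gamma_i^{d,rel} = \min\{\gamma_{i_u}, \gamma_{i_d}\}$ given after \eqref{eq:relay_SINR1}. An outage occurs if and only if $\min\{\gamma_{i_u},\gamma_{i_d}\} < \gammatarget_i$, so the complementary (non-outage) event is the joint event $\{\gamma_{i_u} \ge \gammatarget_i\} \cap \{\gamma_{i_d} \ge \gammatarget_i\}$. The uplink and downlink use distinct subchannels (those of $j_u$ and $j_d$, respectively), and the fast-fading coefficients on these two disjoint frequency resources are taken to be statistically independent; consequently the two events factor. This gives
\begin{equation*}
1 - O^{d,rel}(\alpha_u,\alpha_d) = \bigl(1 - \Pr\{\gamma_{i_u} < \gammatarget_i\}\bigr)\bigl(1 - \Pr\{\gamma_{i_d} < \gammatarget_i\}\bigr).
\end{equation*}

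Next, for each branch I would invoke the generic reduction already established for a typical link: using the SINR form \eqref{eq:gamma_k} and the definition of $\zeta_{k,k'}$ in \eqref{eq:alpha_k_dir1}, the event $\{\gamma_k < \gammatarget_k\}$ is exactly $\{y < \zeta_{k,k'}\gammatarget_k\, x\}$ on the unit-mean fading variables. Applied with $k=i_u$, $k'=j_u$ and $k=i_d$, $k'=j_d$, and using the stated substitutions $\zeta_{i,j_u} = p_{j_u}\mathbb{E}[\widehat h_{i,j_u}^{cr}]/(p_i^{dr}\mathbb{E}[h_i^{dr}])$ and $\zeta_{i,j_d} = p_{j_d}\mathbb{E}[\widehat h_{i,j_d}^{c}]/(p_i^{rd}\mathbb{E}[h_i^{rd}])$, each branch probability equals $O(\alpha_u)$ or $O(\alpha_d)$, respectively, with $\alpha_u = \zeta_{i,j_u}\gammatarget_i$ and $\alpha_d = \zeta_{i,j_d}\gammatarget_i$. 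Finally, for each of the two branches the appropriate case among \eqref{eq:o_dir1_LoS_LoS}--\eqref{eq:o_dir1_NLoS_NLoS} is selected according to whether the desired (D2D-to-UAV or UAV-to-D2D) and interfering (cellular-to-UAV or UAV-to-cellular) links are LoS or NLoS; since the LoS indicators of the four constituent channels are drawn independently per the model of \eqref{eq:p_LoS_per_teta}--\eqref{eq:p_LoS_per_d}, the product decomposition is unaffected.

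The only non-routine step is the independence claim: one must justify that the uplink and downlink SINRs are independent so that $\Pr\{\min \ge \gammatarget_i\}$ factors. This follows because (i) the fast-fading variables on the two orthogonal subchannels are independent, (ii) the additive interference terms $\mathcal{I}_{k_u,k'_u}^d$ and $\mathcal{I}_{k_d,k'_d}^d$ are aggregated over disjoint bands and inherit this independence, and (iii) even though the UAV appears in both hops, the DF processing decouples the two noise-plus-interference processes statistically. Once this independence is accepted, the claimed identity \eqref{eq:o_rel} is immediate, and the selection rule for $O(\alpha_u), O(\alpha_d)$ among \eqref{eq:o_dir1_LoS_LoS}--\eqref{eq:o_dir1_NLoS_NLoS} is a direct enumeration with no further computation needed.
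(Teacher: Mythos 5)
Your proof is correct and follows exactly the argument the paper leaves implicit (the lemma is stated as "easily verified"): write the DF end-to-end outage as the complement of both hops succeeding, invoke independence of the fading on the two orthogonal subchannels to factor the joint non-outage probability, and reduce each hop to the generic $O(\alpha)$ form via \eqref{eq:gamma_k} and \eqref{eq:alpha_k_dir1} with the stated $\zeta_{i,j_u}$, $\zeta_{i,j_d}$. No gaps; the case selection among \eqref{eq:o_dir1_LoS_LoS}--\eqref{eq:o_dir1_NLoS_NLoS} is, as you note, a direct enumeration conditioned on the LoS/NLoS states of each hop.
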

	
	\subsection{Expressions of Link Capacity for D2D and cellular users}
	Based on the expressions obtained for the outage probability of different links, we can now calculate the channel capacity for different users. For a given user having the SINR $\gamma$, the (ergodic) channel capacity  is obtained according to the Shannon Theorem as follows:
	\begin{multline}
	\label{eq:R_calc}
	\mathbb{E}[ \log_2 (1+\gamma)]  
	= \int_{0}^{\infty} \log_2 (1+\gamma) f_{\boldsymbol{\gamma}}(\gamma)  d\gamma 
	\\
	 	 = 
	\frac{1}{\ln 2} \int_{0}^{\infty} \frac{1- F_{\boldsymbol{\gamma}}(\gamma)}{1+\gamma} d\gamma,
	\end{multline}
	where $f_{\boldsymbol{\gamma}}(\gamma)$ and $F_{\boldsymbol{\gamma}}(\gamma)$ are the probability density function (pdf) and cumulative density function (cdf) of the SINR, respectively.
	Therefore, the channel capacity of cellular users and D2D pairs are obtained as follows:
	\begin{subequations}
		\label{eq:Rate_Statements}
		\begin{align}
		\label{eq:R_c_dir}
		&R^c(\zeta_{j,i})=f_r\left( O^c(\zeta_{j,i} \gamma) \right),
		\\
		\label{eq:R_d_dir}
		&R^{d,dir}(\zeta_{i,j})=f_r\left( O^{d,dir}(\zeta_{i,j} \gamma) \right),
		\\
		\label{eq:R_d_rel}
		&R^{d,rel}(\zeta_{i,j_u},\zeta_{i,j_d}) =  f_r\left( O^{d,rel}(\zeta_{i,j_u} \gamma,\zeta_{i,j_d} \gamma) \!\right),
		\end{align}
	\end{subequations}
	where $f_r(g(\gamma))=\frac{1}{\mathrm{ln}(2)}\int_{0}^{\infty} \frac{1-g({ \gamma})}{1+\gamma} d\gamma$, \eqref{eq:R_c_dir} obtains the channel capacity of cellular user $j$ sharing subchannel with D2D pair $i$, \eqref{eq:R_d_dir} obtains the channel capacity of direct D2D pair $i$ reusing the subchannel of cellular user $j$, and \eqref{eq:R_d_rel} obtains the channel capacity of relayed D2D pair $i$ reusing the subchannels of cellular users $j_u$ and $j_d$ in the uplink and downlink paths, respectively.

	\section {Expressions of Frame Decoding Error Probability}
	Channel capacity is the largest achievable rate at which the information can be transmitted regardless of the decoding error probability, however, due to hardware limitations such as the modulation and demodulation techniques employed in the transmitter and receiver, there always exists a gap between the Shannon capacity and the achievable rate. Consider $n$ to to be the blocklength and $\eps_n$ to be the corresponding decoding error probability. The maximum achievable bit-rate per symbol for quasi-static channels under finite block-length ($n<100$) is tightly approximated by the following equation \cite{yang2014quasi,she2019ultra}.
	\begin{align}
	\label{eq:r_frame}
	R^{*}(n,\eps_n) \approx C -\sqrt{\frac{V}{n}} Q^{-1} (\eps_n), 
	\end{align}
	where $C=\log_2 (1+\gamma)$ is the channel capacity (in bits/second/Hz),  $ V=1-1/\left(1+\gamma \right)^2$ is called the channel dispersion, $Q^{-1}$ is the inverse of Marcum Q-function $(Q(x)=\frac{1}{\sqrt{2 \pi}} \int_x^{\infty} e^{-(\frac{u^2}{2})}du)$.
	Assume that the transceiver modulation-demodulation mechanisms limit the achievable data rate to some coefficient of the channel capacity, i.e.,  $R^*(n,\eps)=\xi \log_2(1+\gamma)$ where $\xi<1$ is  constant. Then from \eqref{eq:r_frame} we have
	$C(1-\xi)=\sqrt{\frac{V}{n}} Q^{-1} (\eps_n)
	$, and by inversing this equation, we have
	\begin{align}
	\label{eq:Qu}
	&\eps_n (\gamma)
	 = Q\left((1-\xi)\sqrt{\frac{n}{V}}C \right)
\notag\\&	=Q\left(\!(1-\xi)\sqrt{\frac{n}{1-\frac{1}{(1+\gamma)^2}}} \log_2 (1+\gamma)\! \right)
	\!\!\!= Q\left(u(\gamma) \right), 
	\end{align}
	where $u(\gamma)=\frac{(1-\xi)\sqrt{n}}{\ln 2} \left(1-\frac{1}{(1+\gamma)^2}\right)^{-0.5} \ln (1+\gamma)$.
	\begin{theorem}
		\label{th:decoding}
		For cellular user $j$ sharing a subchannel with D2D pair $i$ and utilizing frames of block-length $n$, the frame decoding error probability under quasi-static regime 
		is obtained as
		\begin{multline}
		\label{eq:eps_n_average}
		\mathbb{E}\{\eps_n(\gamma)\} \approx 
		\sum_{i=1}^{L-1}(\omega_{i-1}-\omega_i)H(\zeta_{j,i},\gamma_i) + \omega_L H(\zeta_{j,i},\gamma_L)\\
		\stackrel{\Delta}{=}
		\overline{\eps}_n(\zeta_{j,i}),
		\end{multline}
		where $L>1$ is the integer approximation factor\footnote{We will show through numerical results that assigning $L=4$ which is equivalent to a 4-level piece-wise linearization of \eqref{eq:Qu}, tightly fits the corresponding exact expression with negligible error.}, $\omega_i=\frac{0.5/L}{\gamma_i - \gamma_{i-1}}$, $\gamma_i=\eps_n^{-1}(0.5(1-i/L))$ for $0\leq i \leq L-1$, and $\gamma_L=\eps_n^{-1}(\Delta)$, in which $\Delta \ll 0.5/L$ is an arbitrary small value constant, and
		\begin{subequations}
			\begin{numcases}{\!\!\!H(\alpha,\gamma)=}
			\label{eq:H_N}
			H_N(\alpha,\gamma), &\!\!\!\!\!\!\!\textrm{for NLoS interference,}
			\\
			\label{eq:H_L}
			H_L(\alpha,\gamma), &\!\!\!\!\!\!\!\textrm{for LoS interference,}
			\end{numcases}
		\end{subequations}
		where
		\begin{align}
		&H_N(\alpha,\gamma) \!=\! \gamma - 
		\frac{m'^{m'}}{\Gamma(m')} \sum_{k=0}^{m-1} 
		\frac{\Gamma(k+m')(m\alpha)^k \gamma^{k+1}}{(k+1)! (m')^{k+m'}} 
		\notag\\
		&\times {}_2 F_1 \left(k+1,k+m',k+2,-\frac{m\alpha\gamma}{m'}\right), 
		\\
		& H_L(\alpha,\gamma)=\frac{ e^{-K} (m\alpha )^m  \gamma^{m+1}} {\Gamma(m)  } \times \notag\\& \sum_{j=0}^{\infty} 
		\sum_{k=0}^j 
		\frac{\Gamma(k+m) K^j {}_2 F_1 \left(m+1,m+k,m+2,\frac{-m\alpha\gamma}{K+1}\right) }{j!k! (K+1)^m (m+1)}.
		\nonumber \\
		\end{align}
	\end{theorem}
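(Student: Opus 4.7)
My plan is to reduce the expectation of the complicated function $\eps_n(\gamma)=Q(u(\gamma))$ to integrals of the cellular-user SINR CDF through a piecewise-linear approximation of $\eps_n$, and then evaluate those integrals in closed form using Lemmas 2 and 3 above.

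\textbf{Step 1 (Piecewise linearization of $\eps_n$).} Because $u(0)=0$ and $u(\gamma)$ is strictly increasing, $\eps_n(\gamma)$ decreases monotonically from $0.5$ to $0$. I partition its range by the breakpoints $\gamma_0=0$, $\gamma_i=\eps_n^{-1}(0.5(1-i/L))$ for $1\le i\le L-1$, and $\gamma_L=\eps_n^{-1}(\Delta)$. On each sub-interval $[\gamma_{i-1},\gamma_i]$ I replace $\eps_n$ by the secant line of slope $-\omega_i=-(0.5/L)/(\gamma_i-\gamma_{i-1})$, and set the surrogate to zero beyond $\gamma_L$ (the residual $\Delta$ is negligible). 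This is exactly the level-$L$ surrogate discussed in the theorem, expected to be tight already for $L=4$.

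\textbf{Step 2 (Ramp decomposition and integration by parts).} I decompose the surrogate as $\sum_{i=1}^{L}\omega_i R_i(\gamma)$, where $R_i(\gamma)$ equals $\gamma_i-\gamma_{i-1}$ on $[0,\gamma_{i-1}]$, equals $\gamma_i-\gamma$ on $[\gamma_{i-1},\gamma_i]$, and vanishes for $\gamma\ge\gamma_i$; verifying that $\tilde{\eps}_n(\gamma_j)=0.5(1-j/L)$ at each breakpoint confirms the decomposition. A single integration by parts yields $\mathbb{E}[R_i(\gamma)]=\int_{\gamma_{i-1}}^{\gamma_i}F_{\gamma_j}(t)\,dt$, where $F_{\gamma_j}(t)=O^c(\zeta_{j,i}t)$ is the SINR CDF of the cellular user. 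Defining $H(\zeta_{j,i},\gamma)=\int_0^{\gamma}F_{\gamma_j}(t)\,dt$ (so that $H(\zeta_{j,i},0)=0$), the sum $\sum_{i=1}^{L}\omega_i\bigl[H(\zeta_{j,i},\gamma_i)-H(\zeta_{j,i},\gamma_{i-1})\bigr]$ telescopes into a linear combination of the $H(\zeta_{j,i},\gamma_i)$ weighted by the successive differences of the $\omega_i$'s, matching the form stated in the theorem.

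\textbf{Step 3 (Closed forms for $H_N$ and $H_L$).} For NLoS interference, Lemma 3 gives $F_{\gamma_j}(t)=O_{N,N}(\zeta_{j,i}t)$; with the substitution $u=m\zeta_{j,i}t$ and the hypergeometric identity
\[
\int_0^x\!\frac{u^k}{(b+u)^{k+n}}\,du=\frac{x^{k+1}}{(k+1)\,b^{k+n}}\,{}_2F_1\!\bigl(k+1,k+n;k+2;-x/b\bigr),
\]
each term of the finite Nakagami sum integrates in closed form and reproduces $H_N$. For LoS interference I substitute the Bessel series $I_0(z)=\sum_{n}(z/2)^{2n}/(n!)^2$ in the Rician pdf \eqref{eq:Rician_pdf}, expand the integer-order Nakagami CDF $F_y(z)=1-e^{-mz}\sum_{k=0}^{m-1}(mz)^k/k!$, and integrate termwise against the Rician pdf; the same hypergeometric identity together with a relabelling of the summation indices produces $H_L$.

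\textbf{Main obstacle.} Step~3 in the LoS case is the bottleneck. Multiplying the integer-order Nakagami CDF by the series-expanded Rician pdf produces a double series in which the Rician/Bessel index and the Nakagami expansion index are entangled with the extra factor $(k+n)!$ from the gamma-function moment integrals. After the $t$-integration, interchanging the sum and integral (justified by absolute convergence on each finite interval $[0,\gamma]$) and then regrouping the resulting terms so that the double sum takes the prescribed $\sum_{j=0}^{\infty}\sum_{k=0}^{j}$ shape with the specific prefactor $(m\alpha)^m\gamma^{m+1}/\Gamma(m)$ is the step that requires the most careful bookkeeping. Once that identification is made, Steps 1 and 2 and the NLoS branch of Step~3 are essentially routine.
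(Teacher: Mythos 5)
Your Steps 1--2 and the NLoS branch of Step 3 are sound and essentially reproduce the paper's argument: the paper uses the same breakpoints $\gamma_i$ and slopes $\omega_i$, integrates by parts on each interval $[\gamma_{i-1},\gamma_i]$, shows the accumulated boundary term vanishes by telescoping, and then rearranges $\sum_i\omega_i[H(\gamma_i)-H(\gamma_{i-1})]$ into the stated weighted combination; your ramp decomposition $\sum_i\omega_i R_i(\gamma)$ is just a tidier way of absorbing those boundary terms, and your hypergeometric identity is exactly the one needed for $H_N$ (and is used implicitly by the paper).

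The genuine gap is the LoS branch of Step 3, which you yourself flag as the ``main obstacle'': conditioning on the Rician interferer and expanding the integer-order Nakagami CDF gives
\begin{align*}
O_{N,L}(\alpha t)=1-(K+1)e^{-K}\sum_{k=0}^{m-1}\frac{(m\alpha t)^k}{k!}\sum_{n\ge 0}\frac{\bigl(K(K+1)\bigr)^n (k+n)!}{(n!)^2\,(K+1+m\alpha t)^{k+n+1}},
\end{align*}
and term-by-term $t$-integration of this produces a standalone $\gamma$ plus terms proportional to $\gamma^{k+1}\,{}_2F_1\bigl(k+1,k+n+1;k+2;-m\alpha\gamma/(K+1)\bigr)$ with $k\le m-1$. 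The target $H_L$ has a completely different structure: no standalone $\gamma$, the fixed power $\gamma^{m+1}$, the outer sum $\sum_{j\ge0}\sum_{k\le j}$, and ${}_2F_1(m+1,m+k;m+2;\cdot)$. Passing from your expansion to that form is not a ``relabelling of summation indices''; it requires a genuine double-series resummation, which is in effect the computation of Lemma~\ref{lm:outage_L_N} redone with the roles of the two fading variables exchanged, so as written the proof of $H_L$ is incomplete. The repair is short and is what the paper does: invoke Lemma~2, $O_{N,L}(\alpha\gamma)=1-O_{L,N}(1/(\alpha\gamma))$, and insert the intermediate double-sum form of $O_{L,N}$ obtained in the proof of Lemma~\ref{lm:outage_L_N} (before its rewriting as an $(m-1)$-th derivative). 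The constant $1$ then cancels, each term of the integrand becomes a constant times $(m\alpha\gamma)^m(K+1)^k/(K+1+m\alpha\gamma)^{k+m}$, and your own hypergeometric identity (with numerator power $m$ and denominator power $k+m$) integrates it directly into the stated $H_L$, with the prefactor $(m\alpha)^m\gamma^{m+1}/\bigl(\Gamma(m)(K+1)^m(m+1)\bigr)$ appearing automatically.
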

	\begin{proof}
		See \textbf{Appendix C}.
	\end{proof}

	\section{Resource Allocation for UAV-assisted D2D Underlaid Cellular Network}
	
	Based on the performance metrics characterized in the previous sections, in this section, we formulate and solve the link-type selection (i.e., direct D2D or relayed D2D), subchannel and power allocation problem for a UAV-assisted D2D underlaid cellular network.

	\subsection{Problem Formulation}
	{
		Let $\boldsymbol{\mu^r}\in\{0,1\}^{M^d}$, where $\mu_i^r=1$ and $\mu_i^r=0$ denote that $i$ is relayed or direct D2D pair, respectively. Besides, let $\boldsymbol{\rho}\in \{0,1\}^{M^d\times M^c}$, where $\rho_{i,j}=0$ and $\rho_{i,j}=1$ show that $i\in\M^d$ reuses or does not reuse the subchannel of $j\in\M^c$, respectively.} The link-type (direct or relayed), subchannel and power allocation optimization problem is formally stated as follows:
	\begin{align}
		\label{eq:optm_outer}
		({\bf P1}) \:
		\underset{\substack{ \pbold^d, \pbold^{dr}, \pbold^{rd}, \pbold^c }}{\mathrm{minimize}}
		\sum_i \left[   \mu_i^{r} (p_i^{dr}+p_i^{rd}) + (1-\mu_i^r) p_i^d \right] +
		\sum_{j}  p_j^c 
	\end{align}
	where $\pbold^d, \pbold^{dr}, \pbold^{rd}, \pbold^c$ belong to the set of all possible solutions of:
	\begin{subequations}
		\label{eq:optm_inner}	
		\begin{align}
		\label{eq:optm_rate_objective}
		& \!\!\!\!\!\!\!\! 
		\underset{\substack{\boldsymbol{\rho}, \boldsymbol{\mu}^r,   \pbold^d, \pbold^{dr},  \pbold^{rd}, \pbold^c }}{\mathrm{maximize}}
		\sum\limits_{i} \sum\limits_{j}  (1-\mu_i^r)\rho_{i,j} R^{d,dir}(\zeta_{i,j}) + \nonumber
		\\ 
		& \hspace{30pt} \sum\limits_{i} \sum\limits_{j_u} 
		\sum\limits_{j_d} \mu_i^r\rho_{i,j_u}\rho_{i,j_d} R^{d,rel}(\zeta_{i,j_u},\zeta_{i,j_d}) 
		\\
		\label{eq:optm_const_pout_cellular}
		  {\bf s.t.}& \  \rho_{i,j}\overline{\eps}_n(\zeta_{j,i}) \leq p_\eps, \hspace{10pt} \forall i, j, \\
		\label{eq:optm_const_Rcellular}
		& R^c(\zeta_{j,i}) \geq \rho_{i,j}  \widehat{R}^c, \hspace{10pt} \forall i, j, \\
		\label{eq:optm_const_rhod_domain}
		& \mu_i^r \in\{0,1\}, \rho_{i,j} \in\{0,1\}, \hspace{10pt} \forall i, j, \\
		\label{eq:optm_const_rho_sum_i_fixed_dir}
		& (1-\mu_i^r)\sum\nolimits_{j} \rho_{i,j} \leq 1, \hspace{10pt}  \forall i \\
		\label{eq:optm_const_rho_sum_i_fixed_rel}
		& 2(1-\mu_i^r)+ \mu_i^r \sum\nolimits_{j} \rho_{i,j} = 2, \hspace{10pt}  \forall i \\
		\label{eq:optm_const_rho_sum_j_fixed}
		& \sum\nolimits_{i} \rho_{i,j} \leq 1, \hspace{10pt}  \forall j, 
		\\
		\label{eq:optm_const_pmax} 
		& (p_i^d,p_i^{dr},p_i^{rd},p_j^c) \leq (\overline{p}_i^{d},\overline{p}_i^{dr},\overline{p}_i^{rd},\overline{p}_j^c), \hspace{10pt}  \forall i,j, 
		\\
		\label{eq:optm_const_pmin}
		& (\rho_{i,j}  \underline{p}_i^d,
		\rho_{i,j}  \underline{p}_{i,j}^{dr},
		\rho_{i,j}  \underline{p}_{i,j}^{rd},
		\rho_{i,j}  \underline{p}_{j,i}^c)
		\leq (p_i^d,p_i^{dr},p_i^{rd},p_j^c), \nonumber
		\\ 
		&\hspace{180pt}\forall  i,j. 
		\end{align}
	\end{subequations}
	From \eqref{eq:optm_inner}, the objective of the {\bf inner problem} is to first  maximize the aggregate achievable data rate of all D2D pairs subject to the power constraints of  the all users and QoS (ergodic rate and reliability) constraints of cellular users. Also, \eqref{eq:optm_rate_objective} calculates the objective to be maximized as the total channel capacity of all relayed and direct D2D pairs.
	\eqref{eq:optm_const_pout_cellular} and \eqref{eq:optm_const_Rcellular} guarantee that the average decoding error probability and channel capacity of each cellular user $j$ are kept below the maximum allowed error probability $p_\eps$ and beyond the minimum allowed channel capacity $\widehat{R}^c$, respectively. \eqref{eq:optm_const_Rcellular} can be also viewed as the latency  constraint where $T_{j,i}=n_0/R^c(\zeta_{j,i})$ is the latency of cellular user $j$ if its channel is reused by D2D pair $i$ and $n_0$ is the total amount of data to be transferred, and similarly, $\widehat{T}=n_0/ \widehat{R}$ is the maximum acceptable latency of each cellular user. The constraints
	\eqref{eq:optm_const_rho_sum_i_fixed_dir}, \eqref{eq:optm_const_rho_sum_i_fixed_rel}, and \eqref{eq:optm_const_rho_sum_j_fixed} ensure that inactive direct D2D pairs are allocated to no subchannel, and each active direct D2D pair reuses the subchannel of one cellular user, and each active relayed D2D pair reuses the subchannels of two cellular users, respectively, and finally \eqref{eq:optm_const_pmax} and \eqref{eq:optm_const_pmin} guarantee that the power constraints of all D2D pairs and cellular users hold. {Note that, \eqref{eq:optm_const_pmin} guarantees that the interference-limited assumption always hold.}
	
	To solve the considered MINLP problem, our  methodology is  as follows:
	\begin{enumerate}
		\item For a given subchannel assignment, we first rewrite {\bf P1}  as shown in {\bf P2} of \textbf{Section~V.B} to optimize power allocations. We then solve the inner problem of {\bf P2} to get closed-form optimal power allocations  by reformulating the power variables in terms of new variable $\eta=p_i/p_j$.  The optimal value of $\eta$ (denoted by $\eta^*$) corresponds to all optimal points ($p_i$,$p_j$) lying in the line $p_i=\eta^* p_j$ (provided that the feasibility and interference limited assumptions hold, which is verified in the proof of \textbf{Theorem~2}).
		\item Then, from the space of  optimal powers obtained from the  {\bf inner problem of P2}, we derive a power allocation solution corresponding to the minimum aggregate transmit power of {the UAVs and} all D2D and cellular devices, as in the {\bf outer problem}  of {\bf P2}. 
		\item Then, we formulate our problem as one-to-many matching problem which is solved by first obtaining the {\em optimal subchannel and power allocation} of the maximum weighted one-to-one matching problem (assuming no UAV relays) and then we extend the scheduling scheme  to one-to-many matching problem with UAV relays ({\bf Section V.C}).
	\end{enumerate}

	Our solution approach is different from iterative alternating optimization methods where a problem is typically split into several sub-problems and each sub-problem solves only one optimization variable given the remaining optimization variables  from the previous iteration.

	\subsection{Optimal Power Allocations for D2D and cellular users}
	In the following, we present the closed-form power allocations for any possible subchannel assignment. 
	Suppose that D2D pair $i\in\M^d$ reuses the subchannel allocated to cellular user $j\in\M^c$. 
	Let $p_j\stackrel{\Delta}{=}p_j^c$ be the transmit power of cellular user $j$. Without loss of generality, we assume that $i$ is a direct D2D pair with transmit power $p_i \stackrel{\Delta}{=}p_i^d$, however one can easily verify that the following analysis is also valid for the uplink of relayed D2D pair $i$ with transmit power $p_i \stackrel{\Delta}{=}p_i^{dr}$, and for the downlink of relayed D2D pair $i$ with transmit power $p_i \stackrel{\Delta}{=}p_i^{rd}$.
	The power allocation for D2D pair $i$ and cellular user $j$ corresponding to \eqref{eq:optm_outer} and \eqref{eq:optm_inner} is stated as follows:
	\begingroup
	\begin{align}
	\label{eq:optm_outer_reused}
	({\bf P2}) \qquad 		 \underset{\substack{ p_i , p_j }}{\mathrm{minimize}}\ p_i + p_j \hspace{110pt}
	\end{align}	
	where $p_i, p_j$  belong to the set of all possible solutions of: 
	\begin{subequations}
		\label{eq:optm_inner_reused}	
		\begin{align}
		\label{eq:optm_rate_objective_reused}
		\underset{\substack{p_i , p_j }}{\mathrm{maximize}}\ & R^{d,dir}(\zeta_{i,j}) \hspace{50pt}  \\
		\label{eq:optm_const_pout_cellular_reused}
		\mathrm{s.\ t.}\ &	R^c(\zeta_{j,i}) \geq \widehat{R}^c, \\
		\label{eq:optm_const_Rcellular_reused}
		&\overline{\eps}_n(\zeta_{j,i})\leq p_\eps  , \\
		\label{eq:optm_const_p_reused}
		&(\underline{p}_i, \underline{p}_j) \leq  (p_i,p_j) \leq (\overline{p}_i,\overline{p}_j),
		\end{align}
	\end{subequations}
	\endgroup
	where $\underline{p}_i\stackrel{\Delta}{=}\underline{p}_{i,j}$ and $\underline{p}_j\stackrel{\Delta}{=}\underline{p}_{j,i}$ are obtained from \eqref{eq:interference_limited_dir_k} and $\overline{\eps}_n(.)$ is obtained from \eqref{eq:eps_n_average}.

	\begin{theorem}
		\label{th:power}
		The solution to power allocation sub-problem given in (\ref{eq:optm_outer_reused}) and (\ref{eq:optm_inner_reused}) is:
		\begin{align}
		\label{eq:opt_reused_subproblem_power_pair}
		(p^{\hspace{-1pt}*}_{i} , p^{\hspace{-1pt}*}_{j})=
		\begin{cases}
		\left( \eta^* \underline{p}_j  , \underline{p}_j \right) ,  &\hspace{-5pt}
		\mbox{if } \underline{p}_i/\underline{p}_j \leq \eta^*, 
		\\
		\left( \underline{p}_i ,  \underline{p}_i / \eta^* \right), &\hspace{-5pt}\mbox{if } \underline{p}_i/\overline{p}_j \leq \eta^* < \underline{p}_i/\underline{p}_j,
		\\
		\emptyset, &\hspace{-5pt}\mbox{if } \eta^* < \underline{p}_i/\overline{p}_j,
		\end{cases}
		\end{align}
		where $\eta^*= \min{ \{ \eta_1^*, \eta_2^*,  \overline{p}_i/\underline{p}_j \}}$ in which $\eta_1^*$ and $\eta_2^*$ are unique solution values of $\eta$
		in the equality constraints $R^c\left({\eta  \mathbb{E}[\widehat{h}_{j,i}]/  \mathbb{E}[h_j]}\right)=\widehat{R}^c$ and $\overline{\eps}_n\left({\eta { \mathbb{E}[\widehat{h}_{j,i}]}/  {\mathbb{E}[h_j]}}\right)=p_\eps$, respectively.
	\end{theorem}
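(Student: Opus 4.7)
The plan is to reduce the inner problem to a scalar optimization in the single ratio $\eta := p_i/p_j$, solve it in closed form, and then carry out the outer minimization of $p_i+p_j$ along the one-parameter family of inner maximizers. From \eqref{eq:alpha_k_dir1},
\begin{equation*}
\zeta_{j,i}=\eta\,\frac{\mathbb{E}[\widehat h_{j,i}]}{\mathbb{E}[h_j]},\qquad \zeta_{i,j}=\frac{1}{\eta}\,\frac{\mathbb{E}[\widehat h_{i,j}]}{\mathbb{E}[h_i]},
\end{equation*}
so the cellular capacity $R^c$, the reliability $\overline\eps_n$, and the D2D objective $R^{d,\mathrm{dir}}$ all depend on $(p_i,p_j)$ only through $\eta$; the scale of $(p_i,p_j)$ is constrained only by the box \eqref{eq:optm_const_p_reused}.

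Next I would establish the relevant monotonicities. Because $O^c(\cdot)$ in \eqref{eq:O_cell} is non-decreasing in $\alpha$ and $f_r$ in \eqref{eq:Rate_Statements} is non-increasing in its argument, $R^c$ is strictly decreasing in $\eta$, while the analogous argument applied to $O^{d,\mathrm{dir}}$ gives that $R^{d,\mathrm{dir}}$ is strictly increasing in $\eta$. Likewise, the cellular SINR $\gamma_j^c$ is stochastically decreasing in $\zeta_{j,i}$, so $\overline\eps_n$ is strictly increasing in $\eta$. Consequently, \eqref{eq:optm_const_pout_cellular_reused} and \eqref{eq:optm_const_Rcellular_reused} each reduce to a one-sided bound with a unique boundary value, $\eta\leq\eta_1^*$ and $\eta\leq\eta_2^*$, and combining $p_i\leq \overline p_i$ with $p_j\geq \underline p_j$ in the box constraints supplies the third upper bound $\eta\leq \overline p_i/\underline p_j$. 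Since the D2D rate is strictly increasing in $\eta$, the set of inner maximizers lies on the ray $\{(p_i,p_j):p_i=\eta^* p_j\}$ with $\eta^*=\min\{\eta_1^*,\eta_2^*,\overline p_i/\underline p_j\}$, intersected with the feasible box.

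For the outer step, $p_i+p_j=(1+\eta^*)p_j$ on this ray, so one picks the smallest $p_j$ compatible with $p_j\geq\max\{\underline p_j,\underline p_i/\eta^*\}$ and $p_j\leq\min\{\overline p_j,\overline p_i/\eta^*\}$. A case split on whether $\underline p_i/\eta^*$ exceeds $\underline p_j$ yields: if $\eta^*\geq \underline p_i/\underline p_j$ then $p_j^*=\underline p_j$ and $p_i^*=\eta^*\underline p_j$, with $p_i^*\leq \overline p_i$ automatic from $\eta^*\leq \overline p_i/\underline p_j$, giving the first branch; otherwise $p_j^*=\underline p_i/\eta^*$ and $p_i^*=\underline p_i$, which is feasible exactly when $\underline p_i/\eta^*\leq \overline p_j$, i.e., $\eta^*\geq \underline p_i/\overline p_j$, giving the second branch; finally, if $\eta^*< \underline p_i/\overline p_j$ the lower bound on $p_j$ exceeds its upper bound and the feasible set is empty. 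These are precisely the three branches of \eqref{eq:opt_reused_subproblem_power_pair}.

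The step I expect to be the main obstacle is the strict monotonicity (and hence uniqueness of the roots $\eta_1^*$, $\eta_2^*$) for $\overline\eps_n(\zeta_{j,i})$, because its closed form is a finite weighted sum of hypergeometric terms in $H_N$ or $H_L$ that are not manifestly monotone in $\alpha$. Rather than manipulating the closed form directly, I would argue at the level of the construction in Theorem~\ref{th:decoding}: $\overline\eps_n(\zeta_{j,i})$ is the expectation of a non-decreasing, piecewise-linear surrogate of $Q(u(\gamma))$ with respect to the distribution of $\gamma_j^c$, and stochastic dominance of $\gamma_j^c$ in $\zeta_{j,i}$ preserves strict monotonicity; continuity then delivers the unique root $\eta_2^*$ on the feasible interval. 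Finally, the interference-limited hypothesis underlying the outage and reliability derivations of Sections~III and~IV holds at both candidate optima because $p_i\geq \underline p_i$ and $p_j\geq \underline p_j$ there by construction.
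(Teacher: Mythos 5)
Your proposal is correct and follows essentially the same route as the paper: reduce the inner problem to the scalar $\eta=p_i/p_j$, use monotonicity of $R^{d,\mathrm{dir}}$, $R^c$, and $\overline{\eps}_n$ in $\eta$ to collapse the constraints into $\eta\le\min\{\eta_1^*,\eta_2^*,\overline{p}_i/\underline{p}_j\}$, and then minimize $p_i+p_j$ along the ray $p_i=\eta^* p_j$ intersected with the power box, which yields exactly the three branches (the paper does this last step geometrically via Fig.~\ref{fig:optimal_cases} and proves monotonicity of $\overline{\eps}_n$ through the representation $\overline{\eps}_n=\sum_i(\omega_{i-1}-\omega_i)\int_0^{\gamma_i}O^c(k_2\eta\gamma)\,d\gamma+\omega_L\int_0^{\gamma_L}O^c(k_2\eta\gamma)\,d\gamma$ with positive weights, which is equivalent to your stochastic-dominance argument). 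Only a cosmetic slip: the piecewise-linear surrogate of $Q(u(\gamma))$ is non-\emph{increasing} in $\gamma$, which is precisely what makes $\overline{\eps}_n$ increasing in $\zeta_{j,i}$, as you conclude.
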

	\begin{proof}
		\begin{figure*}
		\centering
			\includegraphics [width=450pt, height=135pt]{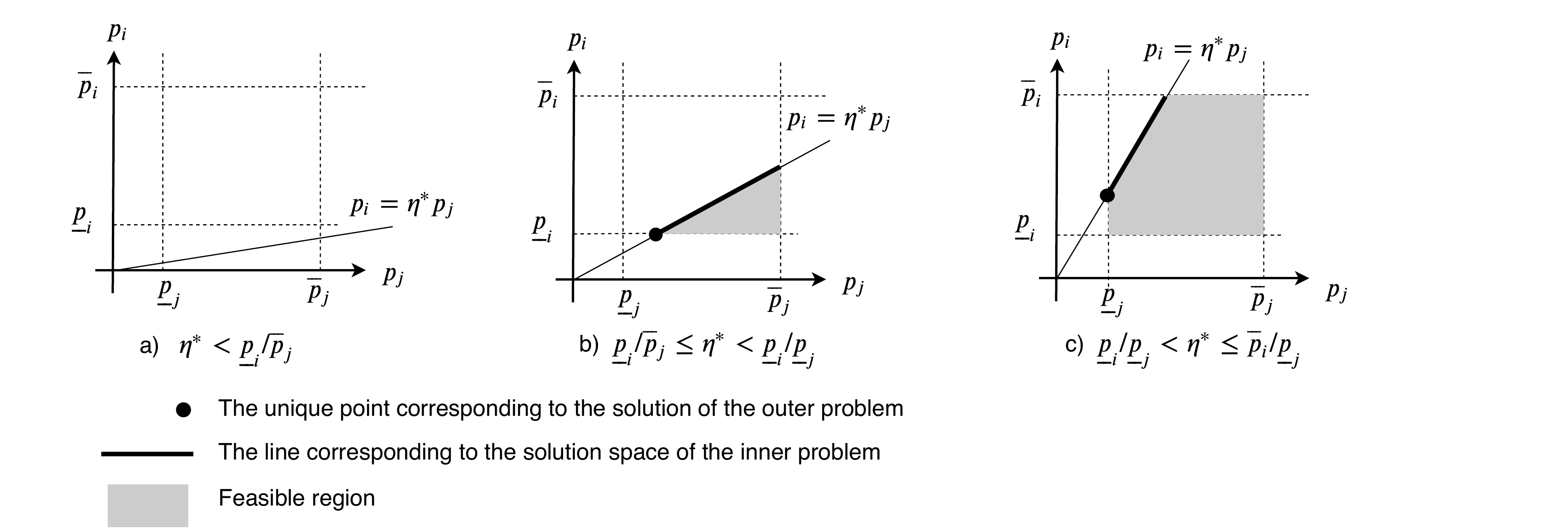} 
			\caption{Illustration of three possible cases of the solution of inner and outer problems. The thick lines show the solution space of the inner problem, and the small filled circles show the single point corresponding to the solution of the outer problem.} \vspace{-20pt}
			\label{fig:optimal_cases}
		\end{figure*}
		We first find the solution to the inner optimization sub-problem (\ref{eq:optm_inner_reused}). By considering $\eta={p_i}/{p_j}$ as the optimization variable, (\ref{eq:optm_inner_reused}) can be rewritten as
		\begin{subequations}
			\label{eq:optm_inner_reused2}
			\begin{align}
			\label{eq:optm_rate_objective_reused2}
			\underset{\substack{\eta }}{\mathrm{maximize}}\ & R^{d,dir}\left( k_1 / \eta \right), \hspace{100pt}  \\
			\label{eq:optm_const_pout_cellular_reused2}
			\mathrm{s.\ t.}\ &	R^c(k_2  \eta) \geq \widehat{R}^c, \\
			\label{eq:optm_eps}
			&\overline{\eps}_n(k_2  \eta) \leq  p_\eps, \\
			\label{eq:optm_const_eta_reused}
			&\underline{p}_i / \overline{p}_j \leq  \eta \leq \overline{p}_i /\underline{p}_{j},
			\end{align}		
		\end{subequations}
		where $k_1=\mathbb{E}[\widehat{h}_{i,j}]/\mathbb{E}[h_i]$, and $k_2= \mathbb{E}[\widehat{h}_{j,i}]/  \mathbb{E}[h_j]$.
		First, we prove that $\overline{\eps}(k_2 \eta)$ in \eqref{eq:optm_eps} is an increasing function of $\eta$. 
		From \eqref{eq:eps_n_average} and \eqref{eq:H}, we have
		\begin{align*}
		&\overline{\eps}_n(k_2\eta)
		=\sum_{i=1}^{L-1}(\omega_{i-1}-\omega_i)H(k_2\eta,\gamma_i) + \omega_L H(k_2\eta,\gamma_L)
		\\
		&=
		\sum_{i=1}^{L-1}(\omega_{i-1}-\omega_i)\int_0^{\gamma_i} \!\! O^c(k_2 \eta \gamma) d\gamma + \omega_L \int_0^{\gamma_L} \!\! O^c(k_2 \eta \gamma) d\gamma.
		\end{align*}
		Noting that $\omega_{i-1}>\omega_{i}, \forall i$, and also by noticing that $O^c(k_2 \eta\gamma)=\mathrm{Pr}\{\frac{y}{x} \leq k_2 \eta \gamma \}$ is an increasing function of $\eta$ (where $y$ and $x$ are fading variables associated with the channel of cellular user $j$ and the interfering channel from user $i$), it is clear that $\overline{\eps}_n(k_2\eta)$ is an increasing function of $\eta$.
		Hence, we can equivalently replace constraint \eqref{eq:optm_eps} by $\eta \leq \eta_2^*$. Now we prove that $R^c(k_2/\eta)$ is also a monotonically increasing function of $\eta$.	From \eqref{eq:R_c_dir} and the definition of $f_r(.)$, we have
		\begin{multline}
    		\label{eq:tt1}
    		R^c(k_2 \eta) = \frac{1}{\mathrm{ln}(2)}\int_{0}^{\infty} \frac{1-O^c(k_2\gamma \eta)}{1+\gamma} d\gamma 
    		=
    		\\
    		\frac{1}{\mathrm{ln}(2)}\int_{0}^{\infty} \frac{1 -O^c(\gamma)}{k_2\eta +\gamma} d\gamma.
		\end{multline}
		Noting the fact that $O^c(\gamma) \leq 1$, from (\ref{eq:tt1}) we conclude that $R^c(k_2 \eta)$ is an decreasing function of $\eta$ and so, \eqref{eq:optm_const_pout_cellular_reused2} can be replaced by $\eta \leq \eta_1^*$.
		In a similar way, we could verify that $R^{d,dir}(k_1/\eta)$ is an increasing function of $\eta$. Thus, we can rewrite (\ref{eq:optm_inner_reused2}) as
		\begin{align}
		\label{eq:optm_rate_objective_reused3}
		\underset{\substack{\eta }}{\mathrm{maximize}}\ & R^{d,dir}\left( k_1 / \eta \right) \nonumber  \\
		& \underline{p}_i / \overline{p}_j \leq  \eta \leq \min{ \{ \eta_1^*, \eta_2^*,  \overline{p}_i/\underline{p}_j \} },
		\end{align}		
		and hence we conclude that $\eta^*= \min{ \{ \eta_1^*, \eta_2^*, \overline{p}_i/\underline{p}_j \}}$ is the unique solution to (\ref{eq:optm_inner_reused2}) provided that $\eta^* \geq \underline{p}_i / \overline{p}_j$ ; otherwise the problem is infeasible.  
		Fig. \ref{fig:optimal_cases} shows that depending on $\eta^*$ (where $0<\eta^*\leq \overline{p}_i/\underline{p}_j$), three possible cases may happen. For case (a) where $\eta^*<\underline{p}_i/\overline{p}_j$, there exists no feasible point. The set of optimal solutions of the inner problem for cases (b) and (c)  are illustrated as the thick lines corresponding to the intersection of the feasible region (displayed in gray) and the line $p_i=\eta^* p_j$. Now, referring to problem (\ref{eq:optm_outer_reused}) and  constraint \eqref{eq:optm_const_p_reused}, the optimal power allocation 
		is obtained as
		(\ref{eq:opt_reused_subproblem_power_pair}), which is shown in black filled circles in the figure.
	\end{proof}

	\subsection{Link-type, Subchannel, and Power Allocation for D2D Pairs}
	In order to solve the link-type, subchannel and power allocation problem stated in (\ref{eq:optm_outer}) and (\ref{eq:optm_inner}), first we consider the case where no UAV relay is involved and all D2D pairs can only establish direct links. 
	As seen in Fig. \ref{fig:matching}-a, consider a bipartite graph composed of two sets of vertices, namely D2D pairs and cellular users whose subchannels are going to be reused by D2D pairs. Each D2D pair can only reuse the subchannel of one cellular user and each cellular user can share its subchannel to only one D2D pair, thus the {scheduling is a one-to-one maximum weighted matching problem which can optimally be solved by Hungarian algorithm. The procedure of calculating the powers and weights for each link and obtaining the optimal solution is described in {\bf Algorithm 1}}.
	
	\begin{algorithm}[t]
		\caption{\small\!: Joint Optimal Power and Subchannel Allocation for Direct D2D Links}
		\begin{algorithmic}[1]
			\State \textbf{Initialize:} Set $w_{i,j}=0, \forall i\in\M^d,\forall j\in\M^c$;
			\For{\textbf{each} $i\in\M^d$} 
			\For{\textbf{each} $j\in\M^c$}
			\State Obtain $p_{i,j}=(p_i^*,p_j^*)$ from  (\ref{eq:opt_reused_subproblem_power_pair}). If $p_{i,j} \! \neq \! \emptyset$ then $w_{i,j} \! = \! R^{d,dir} \!\left( \! \zeta_{i,j} \big|_{\substack{p_i=p_i^*, p_j=p_j^*}}\!\right)\!;$
			\EndFor
			\EndFor
			\State   Find the optimal scheduling corresponding to maximum  weighted matching of $\boldsymbol{w}$ through Hungarian algorithm.
		\end{algorithmic}
	\end{algorithm}
	\begin{figure*}
		\centering
		\includegraphics [width=464pt]{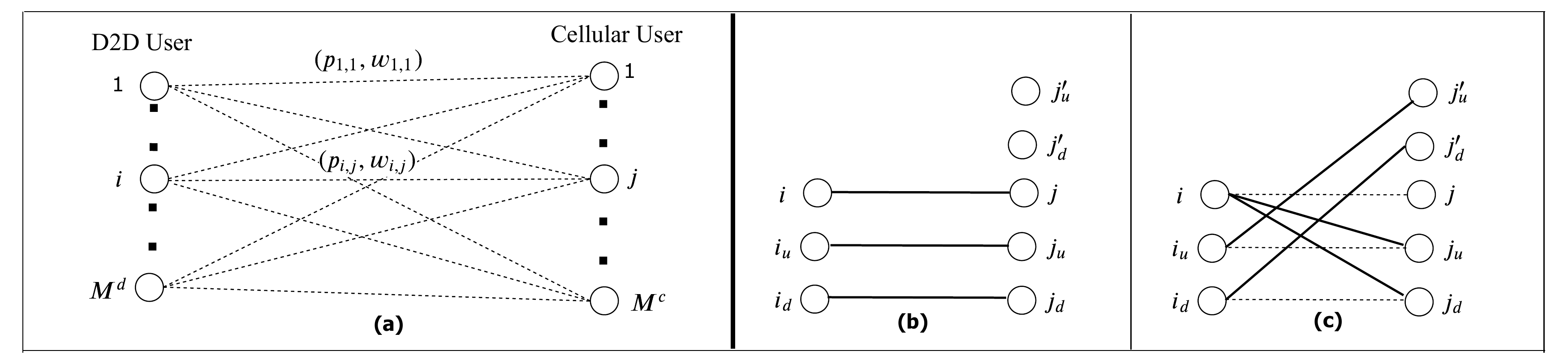} \\
		\caption{Bipartite graph for the matching problem. Fig. (a) shows the one-to-one matching for Algorithm 1 for the case when no UAV relay is used. Figs (b) and (c) show the procedure of changing a direct D2D link to a relayed link in Algorithm 2.} \vspace{-10pt}
		\label{fig:matching}
	\end{figure*}
	In order to further increase the throughput of D2D links, we then update Algorithm 1 by replacing  direct D2D links with poor QoS with relayed links. This problem is modeled as a one-to-many matching game, where each D2D pair may reuse the subchannel of one cellular user (if it establishes a direct D2D link) or the subchannels of two cellular users (for the case of relayed D2D links), but each cellular user still shares its subchannel to at most one D2D pair. Let $i=\rho^d(j)$ denotes a D2D pair $i$ reusing the subchannel of cellular user $j$,  $j=\rho^c(i)$ denotes a cellular user $j$ whose subchannel is reused by D2D pair $i$ in the direct link mode, and $(j_u,j_d)=\rho^c(i)$ denotes cellular users $j_u$ and $j_d$ whose subchannels are reused respectively in the uplink and downlink paths of the relayed link of D2D pair $i$. The relay-assisted scheduling and power allocation problem is proposed in {\bf Algorithm 2}.
	\begin{algorithm}[]
		\caption{\small\!:  Joint Link-Type, Subchannel, and Power Allocations for problem ({\bf P1})}
		\begin{algorithmic}[1]
			\State 
			 Find the best matching for direct D2D links through Algorithm 1; $\M^{d,t}\leftarrow \M^d$; $\M^{c,t}\leftarrow \M^c$; 
			\For{\textbf{each} $i\in\M^{d,t}$}
			\For{\textbf{each} $j_u\in\M^{c,t},j_d\in\M^{c,t}$ where $j_d\neq j_u$}
			\State $j=\rho^c(i)$; $i_u=\rho^d(j_u)$; $i_d=\rho^d(j_d)$;
			\State For pair $(i,j_u)$ obtain $(p_i^{*dr},p_{j_u}^*)$, and for pair 
			$(i,j_d)$
			obtain $(p_i^{*rd},p_{j_d}^*)$ from  (\ref{eq:opt_reused_subproblem_power_pair});
			\State
			$w_{i,j_u, j_d} \! = \! R^{d,rel} \!\left( \! \zeta_{i,j_u} \Bigg|_{\substack{p_i=p_i^{*rd},\\ p_j=p_{j_u}^*}},
			\zeta_{i,j_d} \Bigg|_{\substack{p_i=p_i^{*dr},\\p_j=p_{j_d}^*}}
			\right)\!$;
			\State
			$(j'_u,j'_d)=\underset{j_1,j_2 \in\M^{c,t}\setminus \{j_u,j_d\}}{\argmax} \{ w_{i_u,j_1} + $
			$w_{i_d,j_2} \big|   j_1\neq j_2, \rho^d(j_1)=\rho^d(j_2)=0 \}$ ;
			\State Save $j'_u$ and $j'_d$ as candidates that $i_u$ and $i_d$ may reuse their subchannels to build 
			\Statex \hspace{70pt}
			new direct links, i.e., $j^{replace}_{i,j_u,j_d}=(j'_u,j'_d)$;
			\State
			$\Delta w_{i,j_u,j_d}=w_{i,j_u, j_d}- w_{i,j} - w_{i_u,j_u} - w_{i_d,j_d} +$ 
			$w_{i_u,j'_u} + w_{i_d,j'_d}$;
			\EndFor
			\EndFor
			\State $(i^*,j_u^*,j_d^*)=\underset{i,j_u,j_d}{\argmax\ } \left\{ \Delta w_{i,j_u,j_d} \big| \Delta w_{i,j_u,j_d} > 0 \right\}$;
			\If {$(i^*,j_u^*,j_d^*)\neq \emptyset$}
			\State $j^*=\rho^c(i^*)$; $i_u^*=\rho^d(j_u^*)$;\ $i_d^*=\rho^d(j_d^*)$;
			
			\State
			$(j_1^*,j_2^*)\leftarrow j^{replace}_{i^*,j_u^*,j_d^*}$;
			
			\State
			$\rho^c(i^*) \leftarrow (j_u^*,j_d^*)$;
			$\rho^c(i_u^*) \leftarrow j_1^*$; $\rho^c(i_d^*) \leftarrow j_2^*$;
			
			\State $\M^{d,t} \leftarrow \M^{d,t} / \{i^*\}; \M^{c,t} \leftarrow \M^{c,t} / \{j_u^*, j_d^*\} $;
			
			\State Go to step 2;
			\Else
			\State Scheduling is terminated;
			\EndIf
		\end{algorithmic}
	\end{algorithm}

	As seen in {\bf Algorithm 2},  first we find the optimal scheduling for direct D2D links establishment through Hungarian algorithm, and then we form the set of temporary candidates of relayed D2D pairs as $\M^{d,t}$ and cellular users whose subchannels can be reused by relayed D2D candidates as $\M^{c,t}$. In order to see how relayed links are established, from Fig. \ref{fig:matching}-b, suppose $j,j_u, j_d\in\M^{d,t}$ have initially shared their subchannels to direct D2D pairs $i,i_u,i_d$ respectively. We denote $i_u=\emptyset$, and $i_d=\emptyset$ for the  case when $j_u$ or  $j_d$ have shared their subchannels to none of D2D links. In Steps 2-13, we search among all possible allocations of $j_u,j_d\in\M^{c,t}$ to establish a relayed connection to some $i\in\M^{d,t}$ which results in the maximum throughput increase $\Delta w_{i,j_u,j_d}$. In assigning the subchannels of $j_u$ and $j_d$ to D2D pair $i$, as seen in Step 10, and Fig. \ref{fig:matching}-c, the terms $w_{i,j}$, $w_{i_u,j_u}$, and $w_{i_d,j_d}$ are deducted from $\Delta w_{i,j_u,j_d}$ since their corresponding connections are lost, and instead, the terms $w_{i,j_u,j_d}$, $w_{i_u,j'_u}$, and $w_{i_d,j'_d}$ are added. Note that, in order to replace the direct links of $i_u$ (if $i_u\neq\emptyset$) and $i_d$ (if $i_d\neq\emptyset$) from $j_u$ and $j_d$, to $j'_u$ and $j'_d$, as seen in Step 8, we search through all possible candidates that may maximize the sum of the weights of direct links $w_{i_u,j'_u}+w_{i_d,j'_d}$. After finding the best candidate $(i^*,j_u^*, j_d^*)$ in Step 14, we set all required allocations in Steps 16-18, and remove $\{i^*\}$ and $\{j_u^*,j_d^*\}$ from the candidate sets $\M^{d,t}$ and $\M^{c,t}$, respectively in Step 19, and then we continue the procedure again until no allocation can be found that improves $\Delta w$.

	As a benchmark for the proposed Algorithm~2, we consider two versions of low complexity greedy algorithms. In the first algorithm denoted by \textit{Greedy1}, we first obtain the optimal subchannel and power solution through Algorithm 1 and then D2D pairs sequentially change their direct links to relayed links (in a greedy manner by selecting uplink and downlink channels resulting in the highest throughput) provided that the corresponding user's throughput is increased by changing its link-type. In the second version denoted by \textit{Greedy2}, after obtaining the optimal solution of Algorithm 1, we first sort D2D pairs in ascending order according to the achieved direct-link data rates, aiming at D2D pairs with low data rates to opt for higher quality relayed links, and then a similar mechanism as \textit{Greedy1} is employed.
	
	\subsection{Complexity Analysis}
	For one-to-one matching game, described in {\bf Algorithm 1}, the standard Hungarian algorithm can be solved with time-complexity of $O((M^d)^2 M^c)$ \cite{west1996introduction}.    Regarding the complexity of {\bf Algorithm 2}, in order to make the analysis simpler, we calculate the worst case for the upper-bound time-complexity. The steps 5-10 are executed by $O(1)$. More specifically, we show this for Step 8 (which is potentially of most complexity among others). After finishing the initialization step and before starting Step 2, for each $i$ we can create the vector of elements $w^{sorted}_i(j)$ by sorting $w_{i,j}$ for all $j$ such that $\rho(j)=0$, and thus, by doing this for all $i$, we can create the sorted weight matrix $w^{sorted}$ with an overall complexity of $O(M^c M^d)$. This way, Step 8 can always be executed by $O(1)$; Hence the overall complexity of each iteration for Steps 2-13 in the worst case (by ignoring the user removals from $\M^{c,t}$, i.e., by assuming $|\M^{c,t}|=M^c$) is $O(M^d (M^c)^2)$. Now, since the procedure is run again, in Step 20, after the removal of $i^*$ from $\M^{d,t}$ in Step 19, the complexity is obtained as $O((M^d M^c)^2)$. Thus, by considering the execution of Hungarian Algorithm in Step 1, the overall upper-bound complexity  will be $O((M^d M^c)^2)+O((M^d)^2 M^c)=O((M^d M^c)^2)$. 
	Similar to the discussion presented above, the complexity of \textit{Greedy1} Algorithm is easily obtained as $O((M^d)^2 M^c) + O(M^d (M^c)^2)$. Complexity of \textit{Greedy2} Algorithm is similar to that of \textit{Greedy1} added by the complexity of sorting D2D rates; thus it is found to be $O(M^d\log(M^d))+O((M^d)^2 M^c) + O(M^d (M^c)^2)=O((M^d)^2 M^c) + O(M^d (M^c)^2)$. 
	%

	\section{Simulation Results}	
	Consider a cell where the BS is located at the center and several cellular and D2D users are randomly located within the cell. Simulation parameters are listed in Table \ref{tbl:simulation_params}. The network frequency and path-loss parameter values are considered according to the experimental results given in \cite{akdeniz2014millimeter}. In all scenarios, except for Fig. \ref{fig:sim_d} in Section \ref{sec:sim_uav} (wherein simulation measures are obtained versus fixed values of D2D pairs distances), cellular users as well as D2D transmitters and receivers are randomly located within the cell area. Based on the positions of the users and UAVs, the probability of LoS relating to main and interference signals of  aerial and terrestrial links are obtained for each user according to \eqref{eq:p_LoS_per_teta} and \eqref{eq:p_LoS_per_d} respectively. 
	By default, we consider two UAVs located at $(450,0)$ and $(-450 ,0)$ with respect to the center, and the height of UAVs are considered to be $300$ m. In order to compensate for the severe path-loss in mm-wave frequencies, it is needed to employ beamforming in the antennas of transmitters and receivers. \textcolor{black}{For each transmitting antenna, we consider that the beamforming is applied in a way that the maximum directivity of the antenna is steered toward the desired link. Therefore, the main receiver receives the desired signal with maximum directivity gain $A^{max}$, and the interfered receiver is subject to a directivity gain lower than $A^{max}$. Let $\theta$ be the angle between a desired link and its corresponding interfering link. We model the antenna beamforming directivity pattern similar to the Gaussian-like shape proposed and employed in \cite{martin2009algorithms} and \cite{werner2015performance} as $
		A(\theta)=\max\left\{A^{max}\exp\left(-0.69\frac{\mathcal{N}(\theta)^2}{\theta_{\textrm{3dB}}^2}\right),A^{min} \right\}
		$
		where $\mathcal{N}(\theta)=\mod_{2\pi}(\theta+\pi)-\pi$, 
		$A^{max}$ is the maximum gain (which is radiated toward the main receiver), $A^{min}$ is a minimum constant, and $\theta_{3\textrm{dB}}$ is the half-power beamwidth (i.e., $A(\theta_{3\textrm{dB}})={A^{max}}/{2}$)}.
	In what follows, simulation parameter values are taken from Table \ref{tbl:simulation_params}, unless explicitly stated otherwise. The results for all following figures are obtained by Monte-Carlo.
	\begin{table*}
		\centering
		\caption{Simulation Parameters}
		\begin{tabular}{|l|l|l|l|}
			\hline
			\textbf{Parameter}                                & \textbf{Description} 
			&\textbf{Parameter}                                & \textbf{Description} 
			\\
			\hline
			Frequency 						& $28$ GHz      &Noise power					& $1.1 \times 10^{-12}$ Watts
			\\						
			Average path-loss for NLoS links		& $72+29.2\log_{10}(d)$ dB     
			&($M^c, M^d$)	                & $(18,10)$    \\
			Average path-loss for LoS links 		& $61.4+20\log_{10}(d)$ dB     
			&Cell area						& $1600\times 1600 m^2$                       
			\\
			($\overline{p}_{i}^d,\overline{p}_{i}^{dr},\overline{p}_{i}^{rd},\overline{p}_{j}^c$)			& ($0.1,0.1,1,0.1$) watts 
			&$\widehat{R}^c$			& $8$ bps/Hz 
			\\
			$(B,C)$ in \eqref{eq:p_LoS_per_teta} \cite{mozaffari2016unmanned}					& (0.1396,11.95) 
			&Fading factors $(m,K)$								& $(2, 12 \textrm{ dB})$ 
			\\
			UAV height, BS height  & $250$ m, $50$ m
			&	$(A^{max},\theta_{3\textrm{dB}})$
			& ($25$ dB,  $15^\circ$)
			\\
			$(p_{\eps},k_n)$			& $(10^{-4},5)$ 
			& & 
			\\			
			\hline
		\end{tabular}
		\label{tbl:simulation_params}
	\end{table*}
	\subsubsection{Validation of derived expressions}
	\begin{figure*}
		\begin{minipage}{.47\linewidth}
			\centering
			\includegraphics [width=230pt,height=105pt]{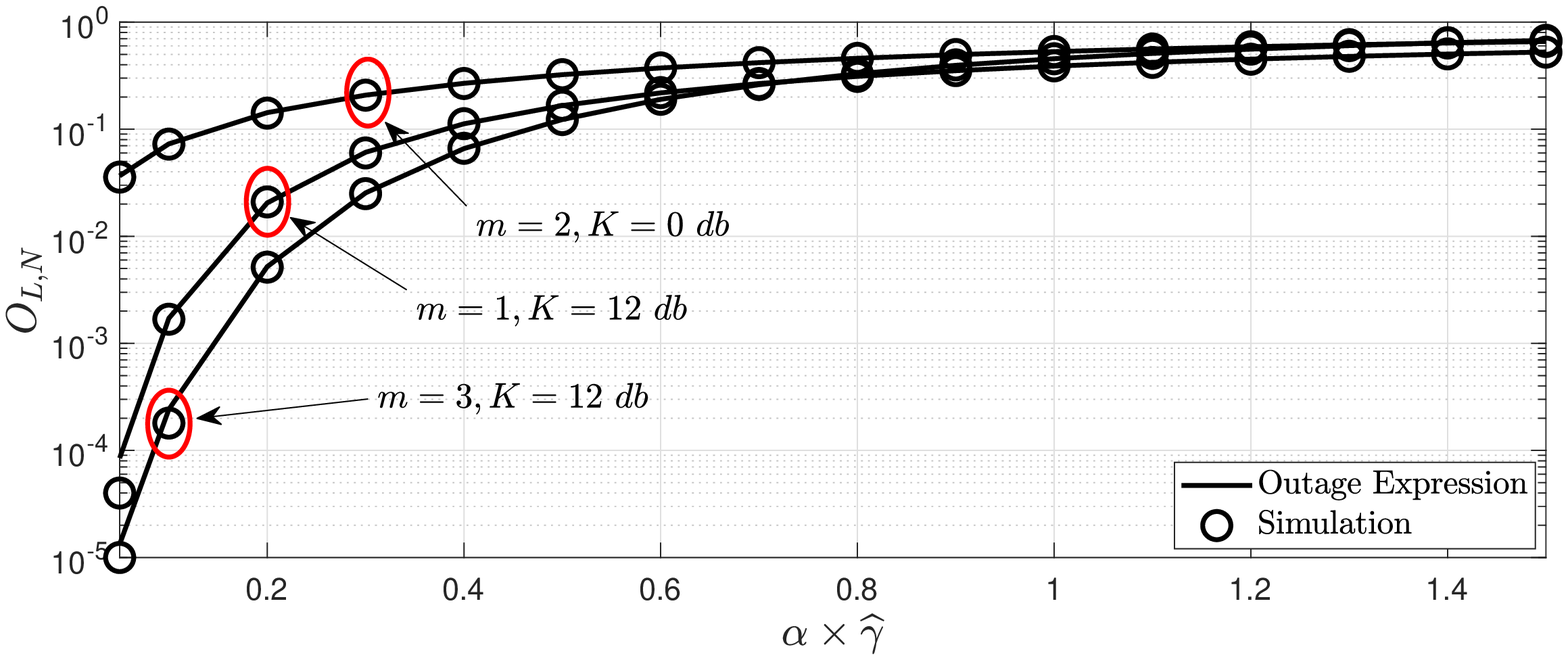} 
			\includegraphics
			[width=230pt,height=105pt]{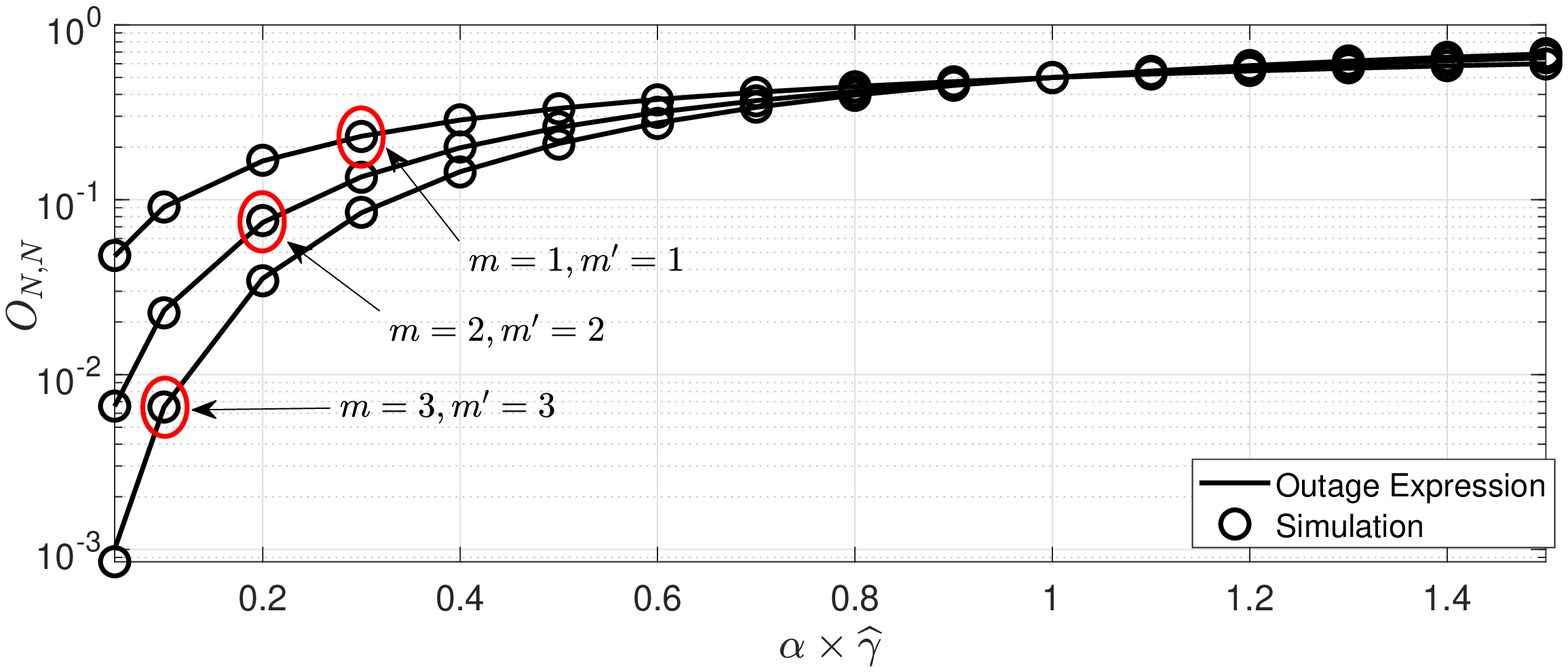}	
			\caption{Comparison of outage probability of derived expressions in \eqref{eq:outage_L_N} (first figure) 
				and \eqref{eq:outage_N_N} (second figure) and their corresponding values obtained through simulation.} \vspace{-10pt}
			\label{fig:sim_outage_verify}
		\end{minipage}
		\hspace{10pt}
		\begin{minipage}{.47\linewidth}
			\centering
			\includegraphics [width=230pt,height=105pt]{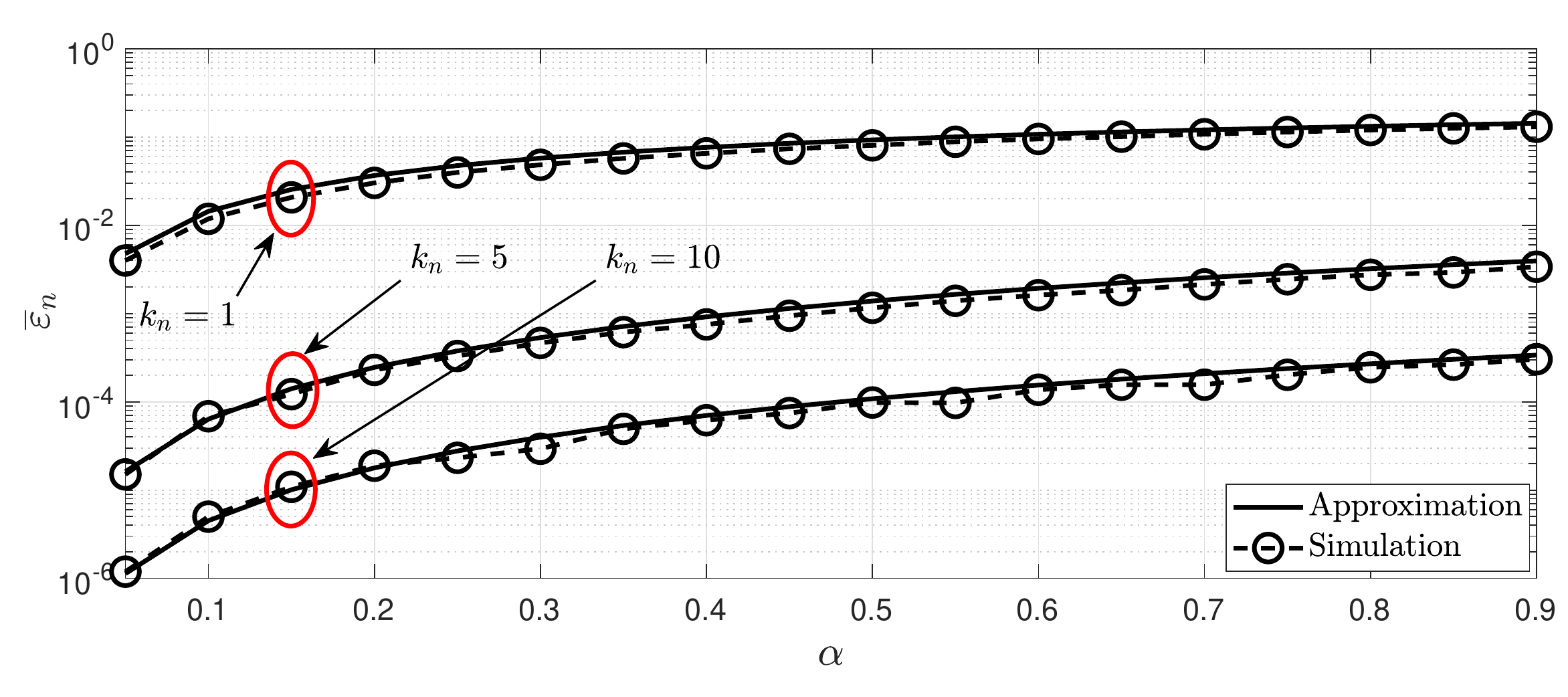} 
			\includegraphics
			[width=230pt,height=105pt]{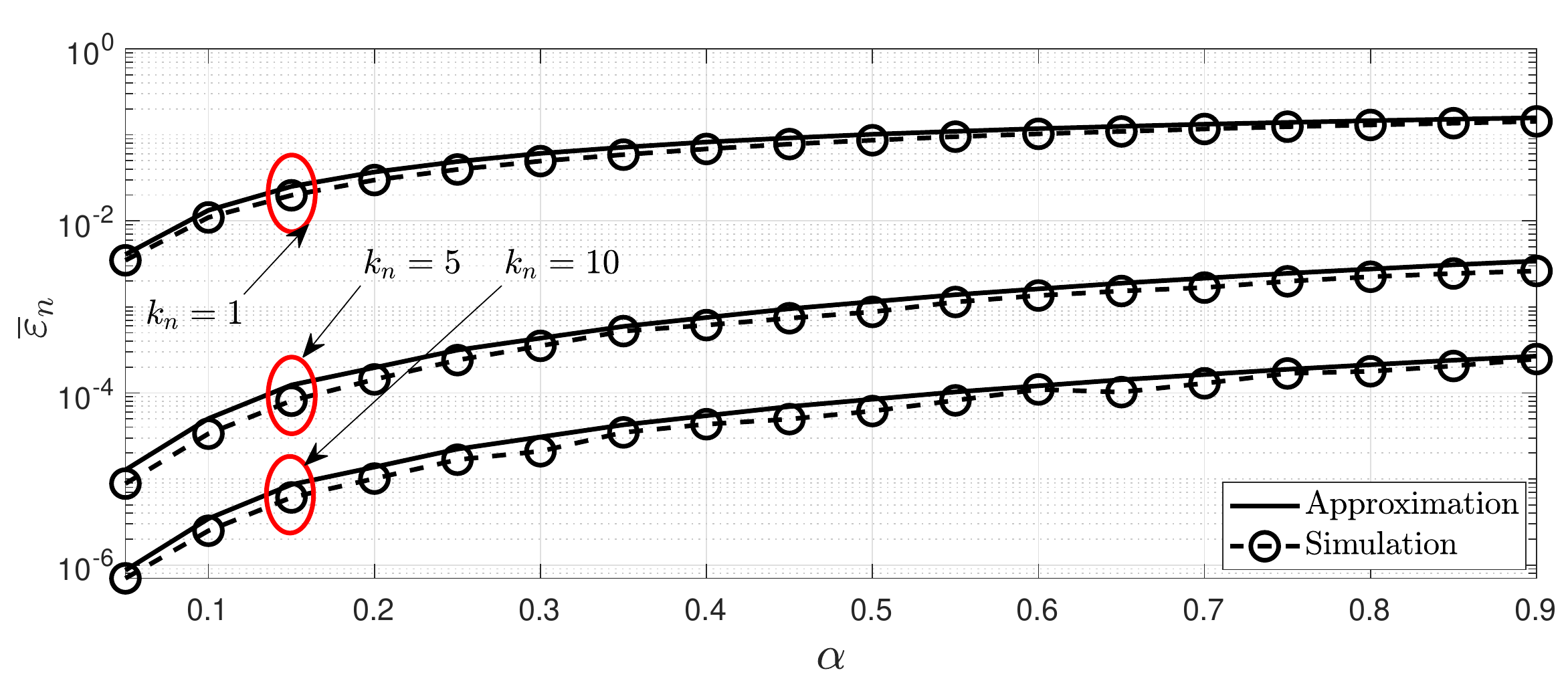} 
			\caption{Comparison of the  frame decoding error probability of \eqref{eq:H_N} (first figure) and \eqref{eq:H_L} (second figure) and corresponding exact values obtained from simulations.} \vspace{-10pt}
			\label{fig:sim_epsilon_verify}
		\end{minipage}
	\end{figure*}

	In order to validate the derived expressions, Fig. \ref{fig:sim_outage_verify} depicts the outage probability expressions  
	of $O_{L,N}$ in \eqref{eq:outage_L_N}, 
	and $O_{N,N}$ in \eqref{eq:outage_N_N} for different values of fading shape factors of $m$ and $K$. It is seen that the values obtained through derived expressions (solid lines), exactly match the values obtained through Monte-Carlo simulation by generating corresponding Rician and Nakagami-m random values and taking the average of the outage. 
	Similar procedure have been used in order to validate the estimated frame decoding error probability expressions obtained in \eqref{eq:H_N} and \eqref{eq:H_L} as seen in Fig. \ref{fig:sim_epsilon_verify}. By considering 4-level piece-wise linearization ($L=4$), it is seen that both derived expressions of \eqref{eq:H_N} (first figure) and \eqref{eq:H_L} (second figure) tightly match the exact values.

	\subsubsection{Performance versus the height of UAV}
	
	\begin{figure*}
		\begin{minipage}{.45\linewidth}
			\centering
			\includegraphics [width=224pt,height=115pt]{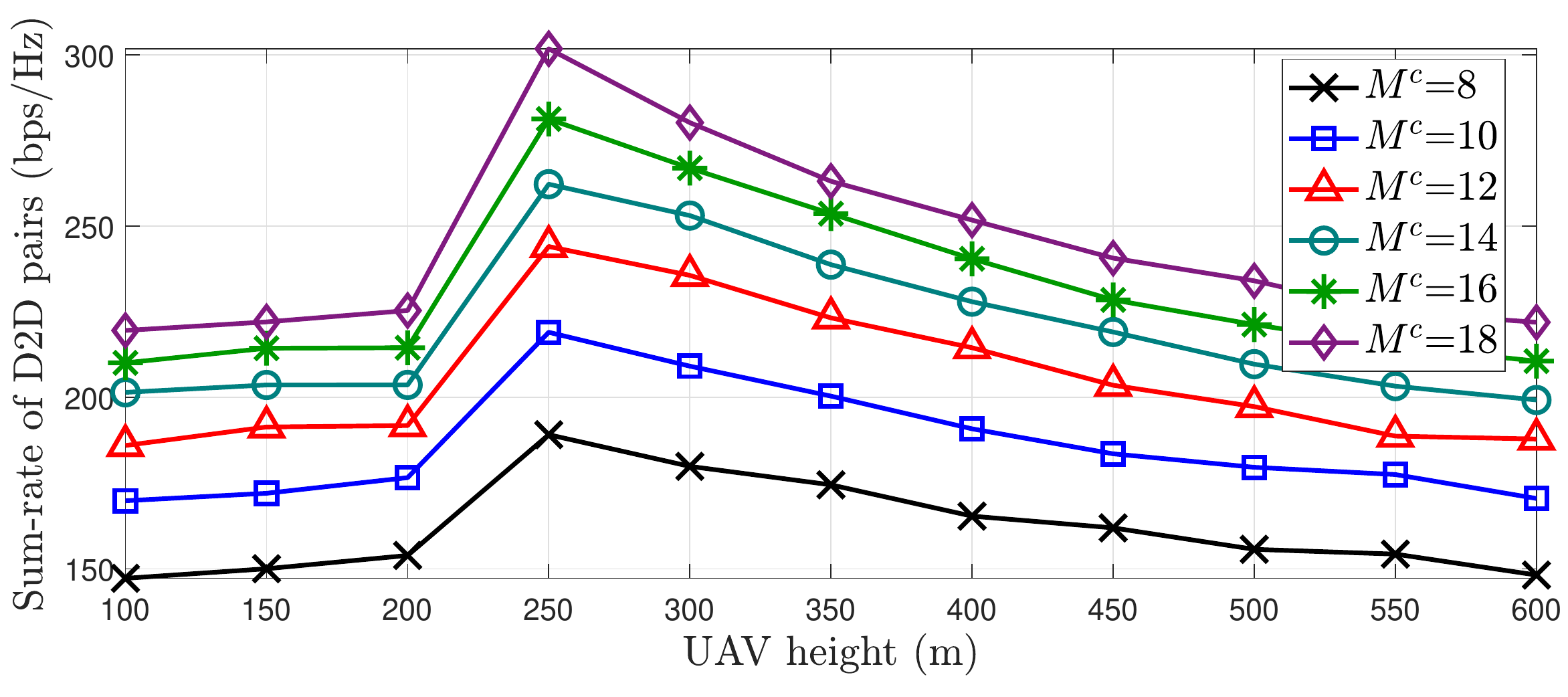}
			\includegraphics
			[width=224pt,height=105pt]{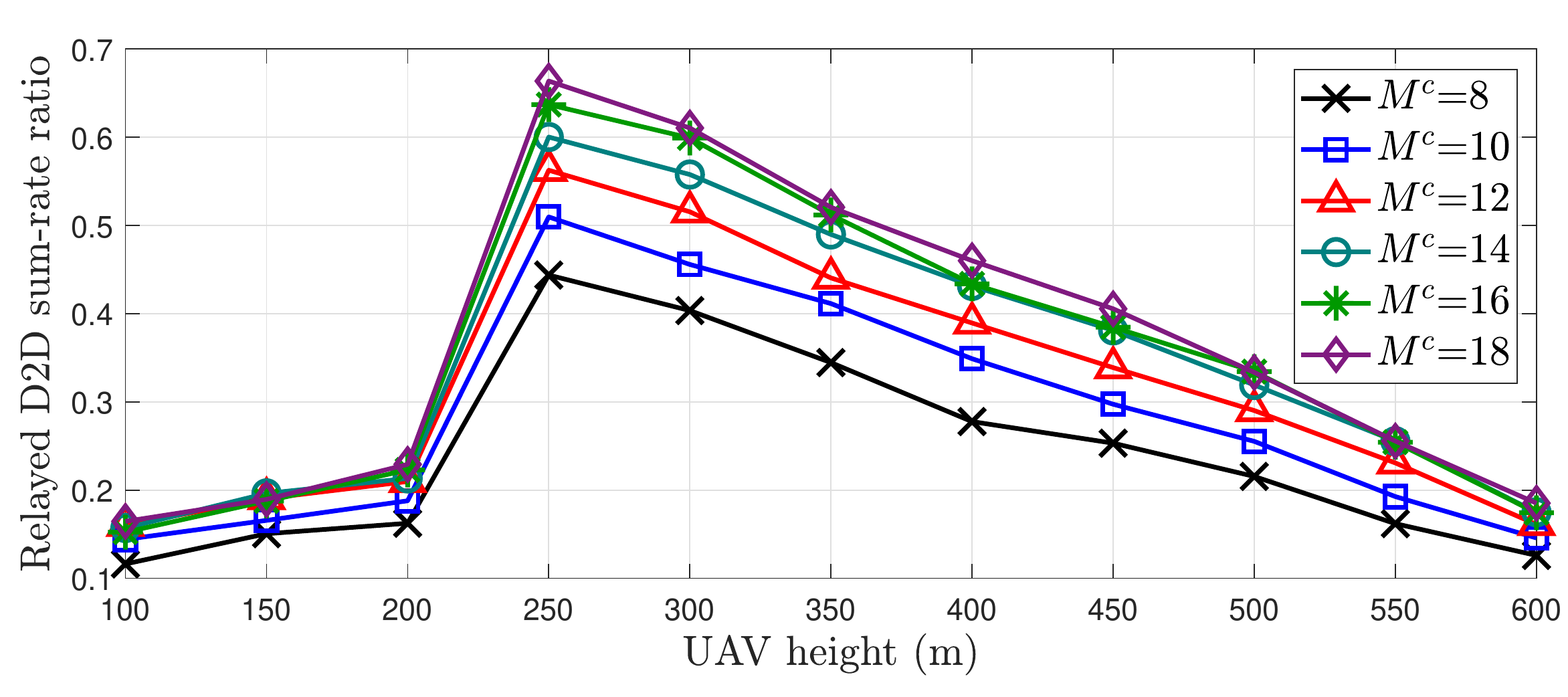} \\
			\caption{Comparison of the sum-rate and relayed D2D sum-rate ratio of D2D pairs versus the height of UAV and number of available cellular users.} 
			\label{fig:sim_h}
		\end{minipage}
		\hspace{20pt}
		\begin{minipage}{.45\linewidth}
			\centering
			\includegraphics [width=224pt,height=115pt]{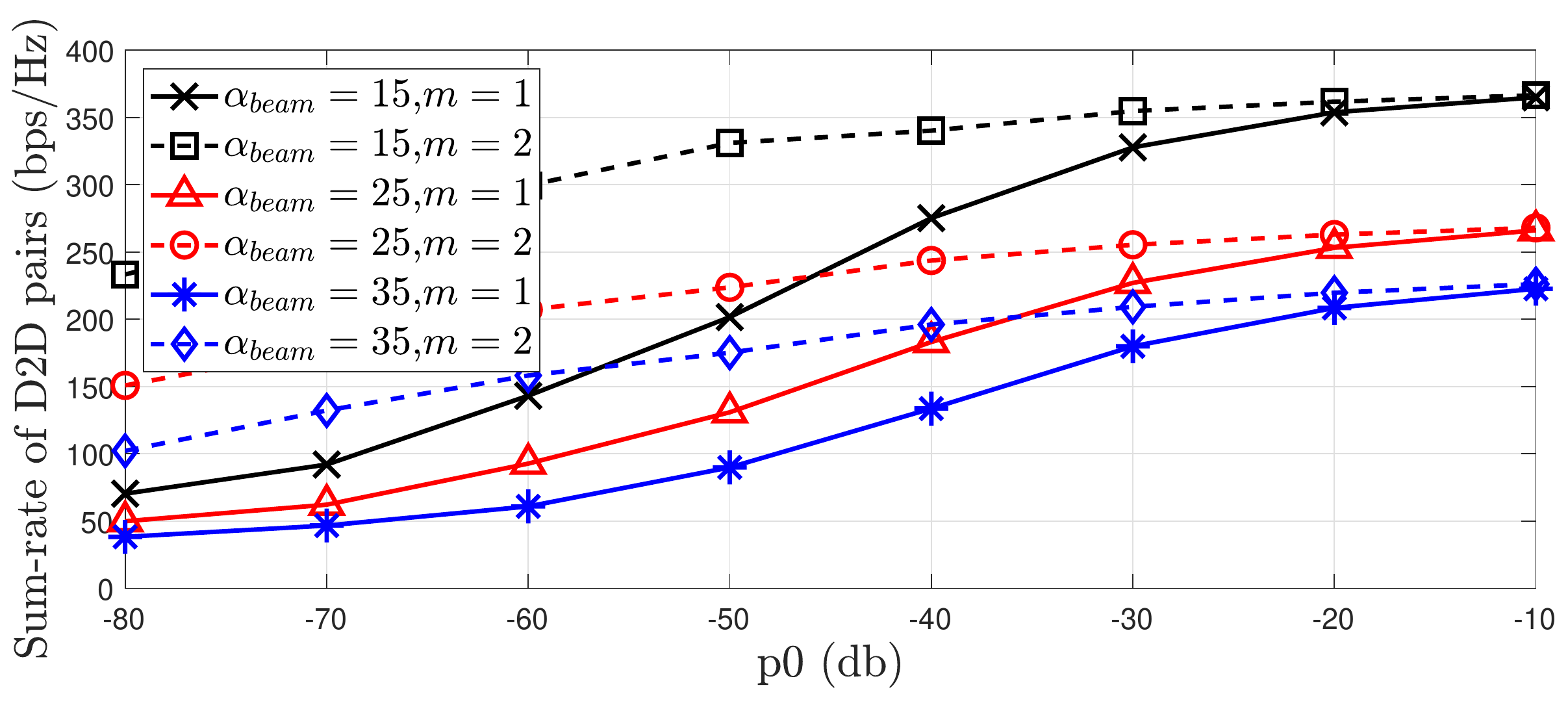} 
			\includegraphics
			[width=224pt,height=105pt]{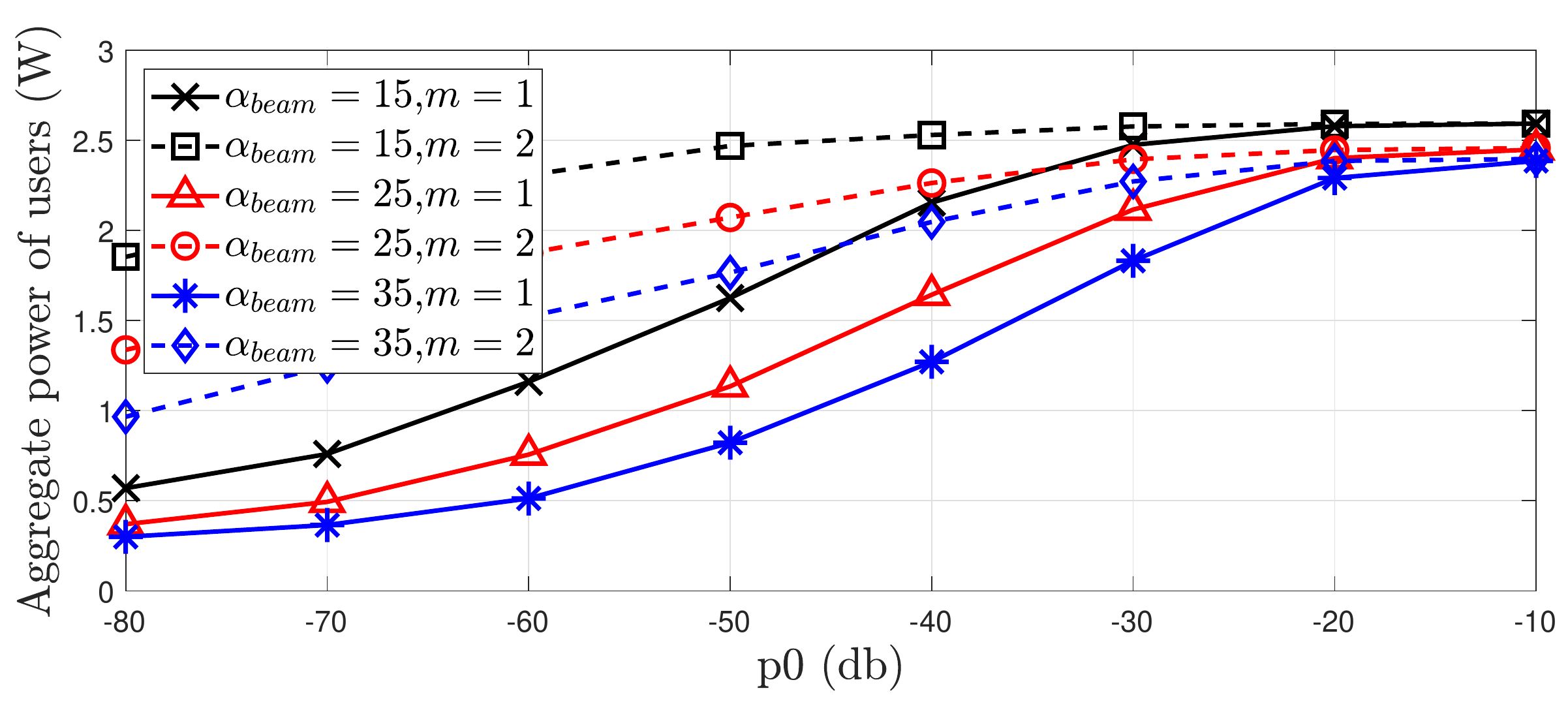} \\
			\caption{Comparison of the sum-rate of D2D pairs, and aggregate power of users versus the decoding error probability ($p_{\eps}$), and $\theta_{\textrm{3dB}}$ and fading shaping factor $m$.} 
			\label{fig:sim_p03}
		\end{minipage}
	\end{figure*}
	
	Fig. \ref{fig:sim_h} shows how increasing the height of UAV and also the number of cellular users influence the performance of the system by using Algorithm 2. We note that an increase in the number of cellular users provide more pairing options for D2D users which always leads to an increase in the system throughput. Also, we note that there is an optimal height (around $250$ m) beyond which the path loss due to high link distance, and below which the NLoS probability are the dominating factors of the performance degradation. For example, while it can be seen that for $M^c=16$, the relayed D2D sum-rate ratio (the ratio of the sum-rate of relayed D2D pairs to that of all D2D pairs) is $0.16$ and $0.17$ for UAV height of $100$ m and $600$ m respectively, it is near $0.64$ for the height of $250$ m corresponding to the optimal total sum-rate of $280$ bps/Hz for D2D pairs. {\em This figure also shows the significance of UAV relays compared to terrestrial relays.}

	\subsubsection{Performance versus decoding error probability }	
	Fig. \ref{fig:sim_p03} shows performance of Algorithm 2 versus maximum acceptable frame decoding error probability of cellular users (i.e., $p_\eps$) for the half-power angle $\theta_{3\textrm{dB}}$ values of $15^{\circ}$, $25^{\circ}$, and $35^{\circ}$ and fading shape factor of $m=1$ and $m=2$. Firstly, it is seen how increasing $p_\eps$ of cellular users increases the sum-rate of D2D pairs for different values of $m$ and $\theta_{3\textrm{dB}}$. This is because allowing cellular users to tolerate more frame decoding error probability will permit D2D users to  impose more interference on their underlaid cellular users. Secondly, it seen how increasing the fading shape factor $m$ (which corresponds to less fading variance) results in the increase of D2D sum-rate. The increase is more evident for lower values of $p_\eps$ since for lower values of $p_\eps$ more stringent power control is employed for D2D users with severe fading conditions of $m=1$. Finally, increasing the  half-power beamwidth results in decreasing the D2D sum-rate due to more interference imposed on cellular users.
	
	\subsubsection{UAV-assisted D2D vs Direct D2D}
	\label{sec:sim_uav}
	\begin{figure*}
		\begin{minipage}{.47\linewidth}
			\centering
			\includegraphics [width=230pt,height=115pt]{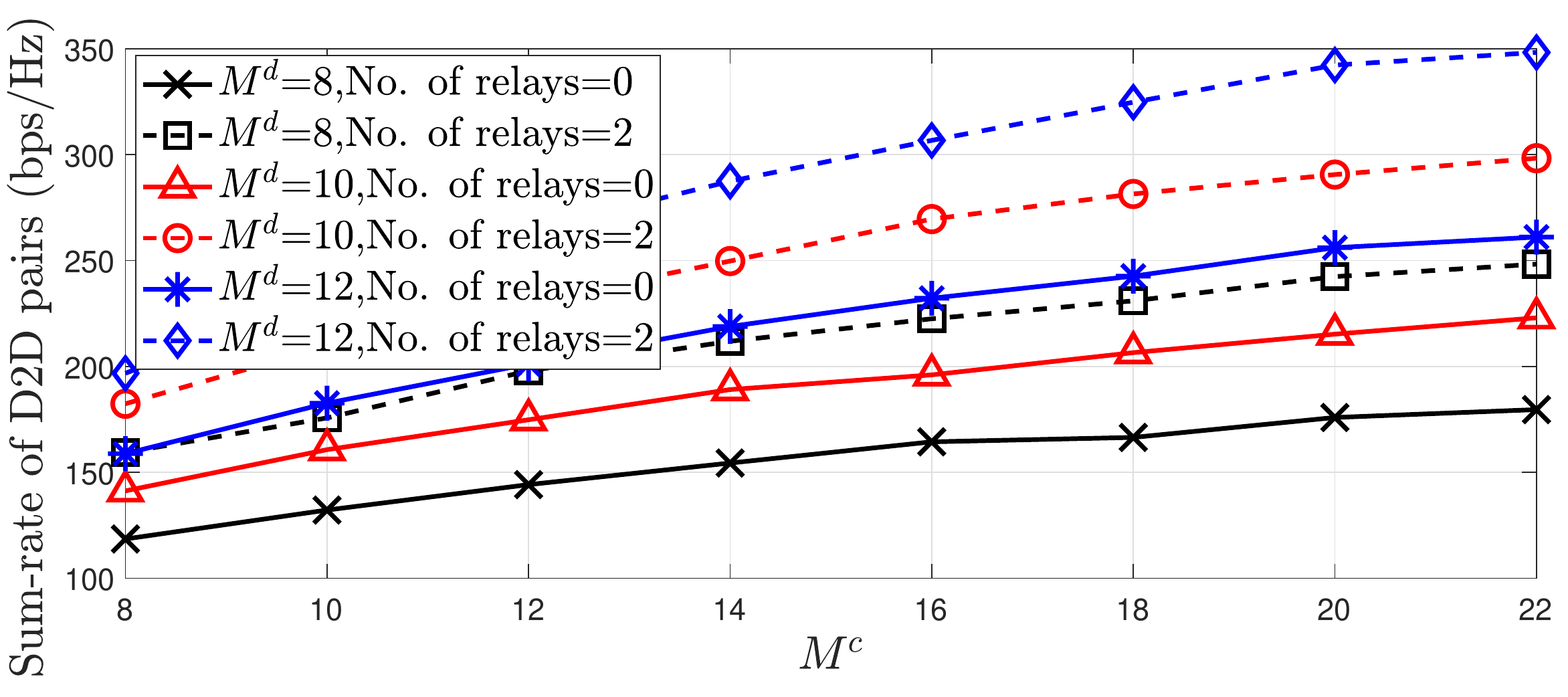} 
			\includegraphics [width=230pt,height=105pt]{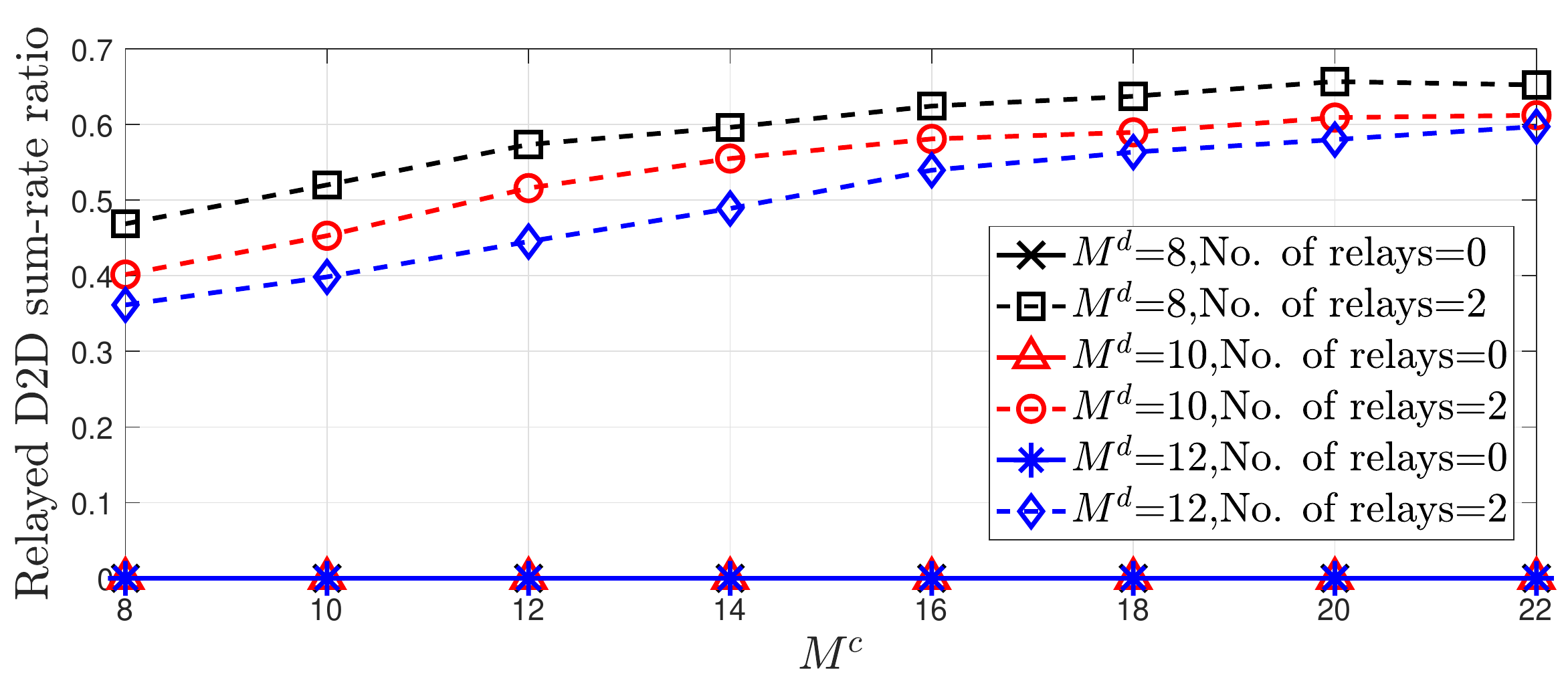} 
			\caption{Sum-rate of D2D pairs and relayed D2D sum-rate ratio for the cases when UAV is or is not employed versus number of cellular users and D2D pairs.} \vspace{-20pt}
			\label{fig:sim_mc}
		\end{minipage}
		\hspace{10pt}
		\begin{minipage}{.47\linewidth}
			\centering
			\includegraphics [width=230pt,height=115pt]{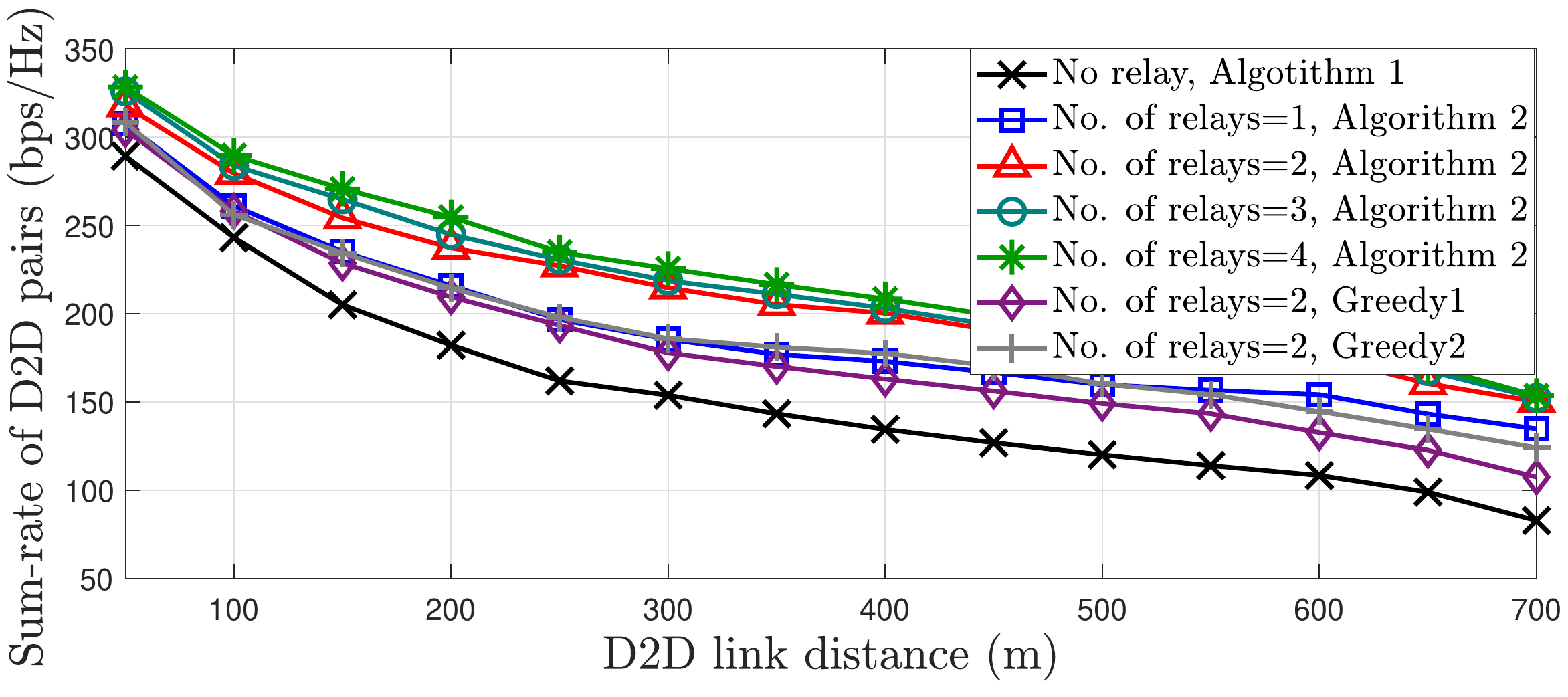} 
			\includegraphics
			[width=230pt,height=105pt]{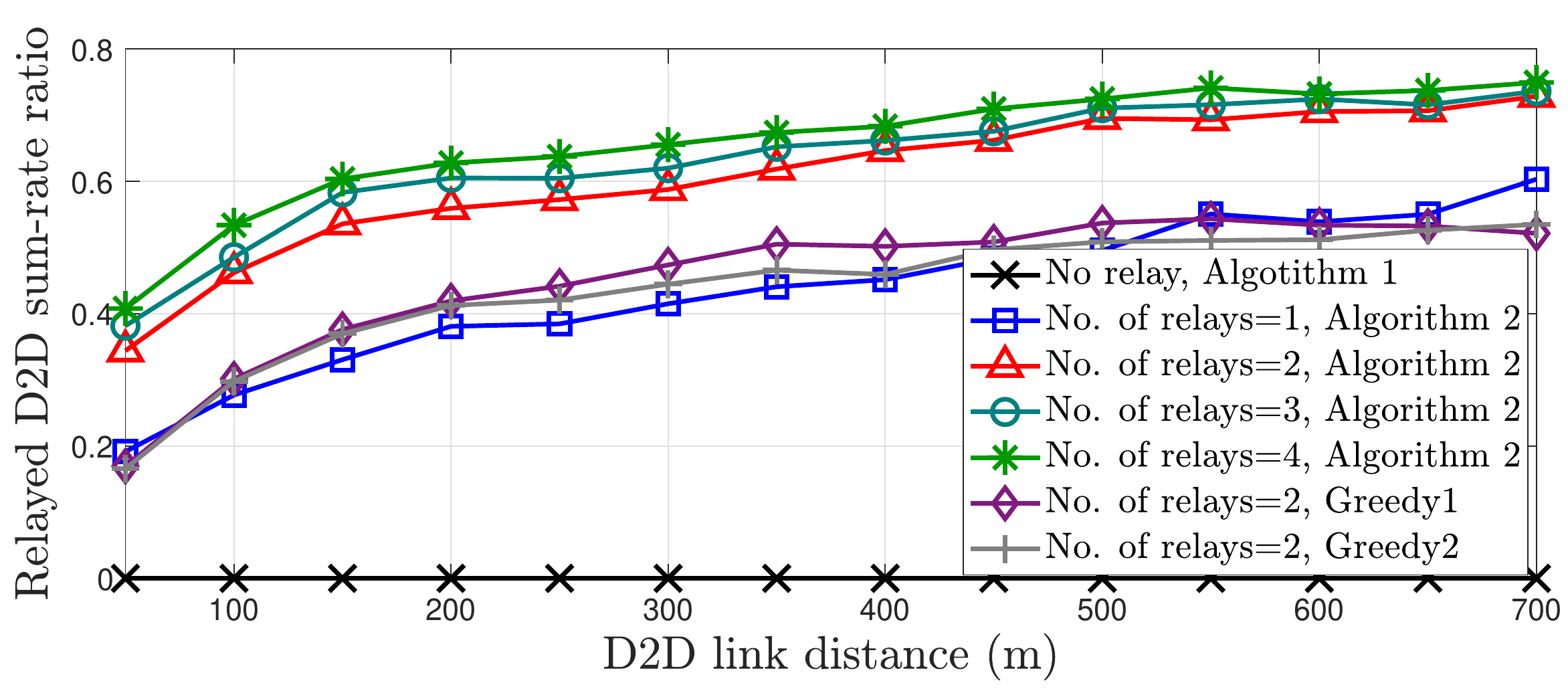} 
			\caption{Sum-rate of D2D pairs, aggregate power of users and relayed D2D sum-rate ratio versus the distance between D2D receiver and transmitter, and  number of UAV relays.} 
			\label{fig:sim_d}
		\end{minipage}
	\end{figure*}
	
	Fig. \ref{fig:sim_mc} illustrates the  sum-rate of D2D users, aggregate transmit power of users, and relayed D2D sum-rate ratio  versus the number of cellular users and number of D2D pairs computed through Algorithm~1 (with no relays) and Algorithm~2 (with UAV relays). It is seen how increasing $M^c$ and $M^d$ and also employing UAV relays can increase the aggregate throughput of D2D pairs. For example, for $M^c=14$ and $M^d=8$, Algorithm 2 enhances the D2D aggregate throughput from $154$ to $210$ bps/Hz which is about $36 \%$ enhancement. Also, increasing $M^d$ from $8$ to $12$, increases the throughput from $154$ to $219$ with no relay, and from $210$ to $285$ with UAV relays. 

	To illustrate the significance of UAVs on the performance of D2D pairs with different link distances, we consider a maximum of 4 UAVs located at $(\pm 450, 0)$ and $(0, \pm 450)$. The  aggregate D2D throughput and relayed D2D sum-rate ratio versus the number of UAVs and D2D links' distance is shown in Fig.~\ref{fig:sim_d}. We have shown the performance of Algorithm 1 when no UAV exists, Algorithm 2 for 1-4  number of UAVs, and that of the proposed greedy algorithms when two UAVs are employed. As can be seen, Algorithm~2 outperforms the greedy algorithm to a great extent. We can also see that the increase in D2D link distance will generally reduce the aggregate throughput of D2D pairs, however, the higher link distance, the more UAVs assist to the sum-rate of D2D pairs. For example, it is seen that for the link distance of $200$ m, Algorithm 1 (with no UAV relay) obtains the sum-rate of $182$ bps/Hz, while Algorithm 2 with 1 and 2 UAVs obtains the sum-rate of $214$ and $237$ bps/Hz which correspond to $17\%$ and $30\%$ enhancement, respectively. However, for D2D link distance of $600$ m, Algorithm 1 (with no relay) obtains the sum-rate of $108$ bps/Hz, while Algorithm 2 with 1 and 2 UAVs obtains the sum-rate of $154$ and $176$ bps/Hz which correspond to $43\%$ and $63\%$ enhancement respectively. Besides, it is seen that employing more than two UAVs will not have significant impact on the performance. 
	\textcolor{black}{
		\subsubsection{Exploring the Effect of Intercell Interference}
		\label{sec:sim_multicell}
		\begin{figure}
			\begin{minipage}{.5\linewidth}
				\includegraphics [width=230pt,height=115pt]{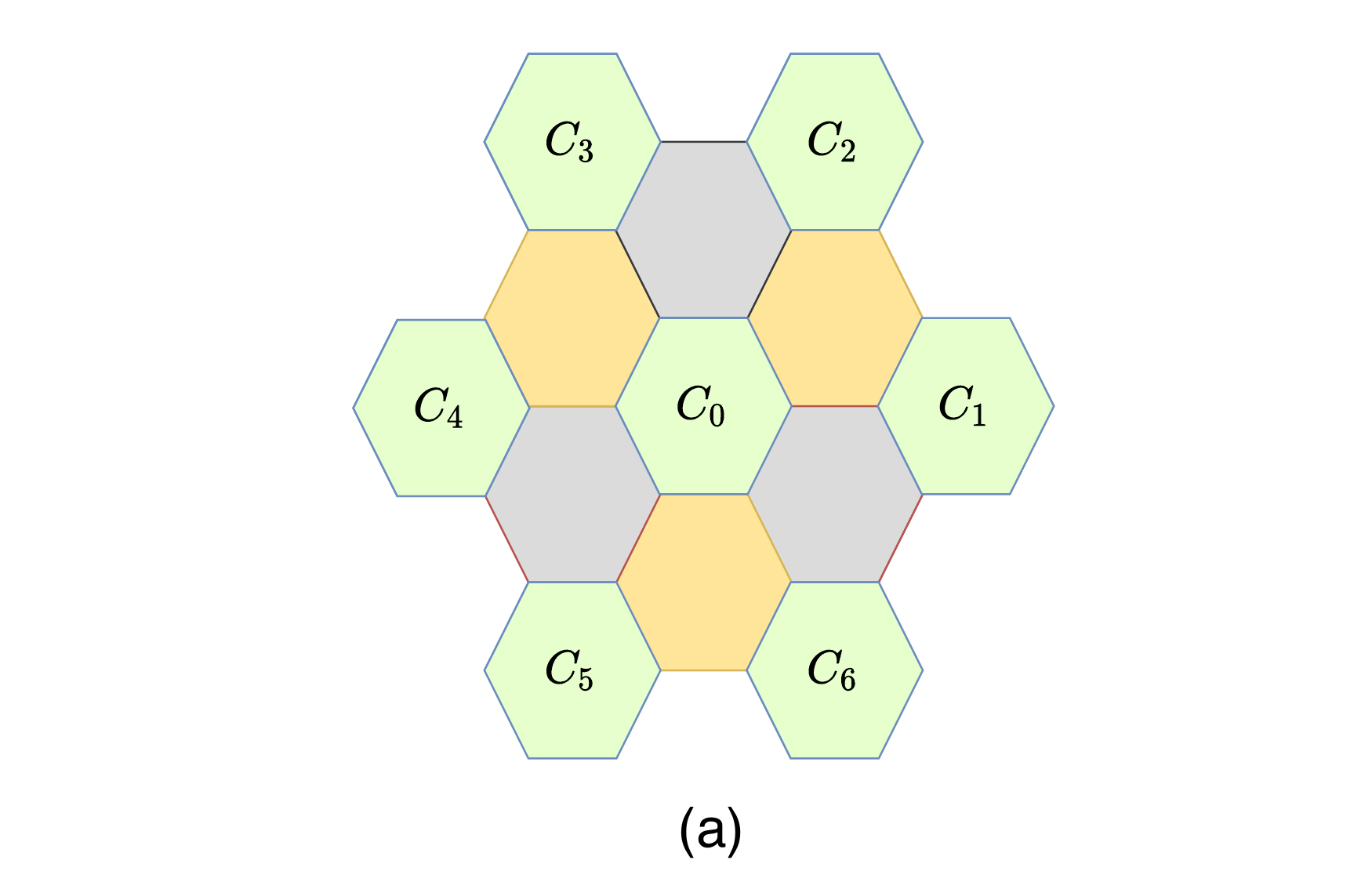} 
				\label{fig:sim_multicell}
			\end{minipage}
			\hfill
			\begin{minipage}{.5\linewidth}
				\includegraphics [width=230pt,height=125pt]{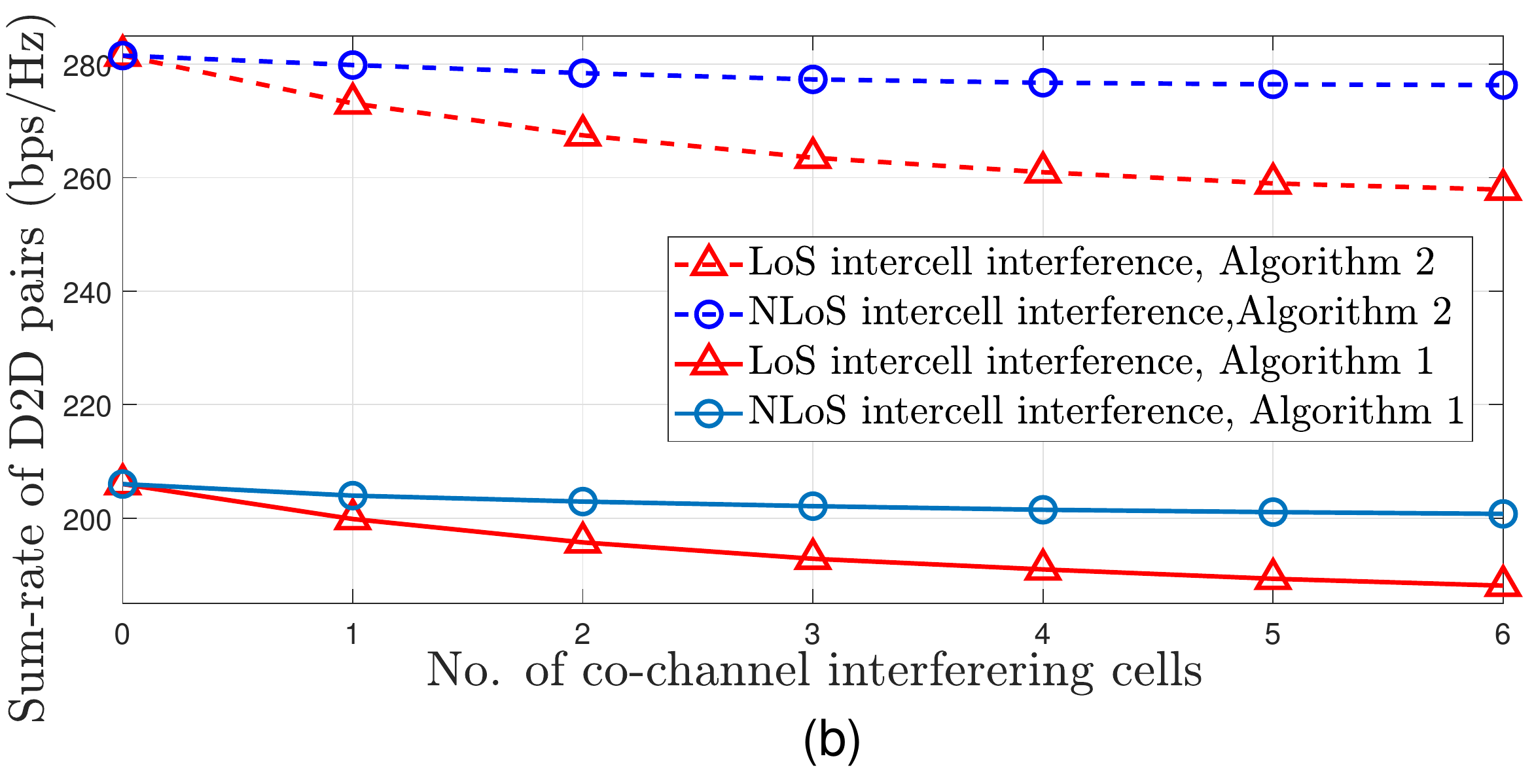}
			\end{minipage}
			\caption{\textcolor{black}{Investigating the effect of intercell co-channel interference. Fig. (a) depicts the network structure, wherein central green cell serves D2D pairs, and cellular users from surrounding green cells impose co-channel intercell interference. Fig. (b) illustrates the performance of Algorithm 1 and Algorithm 2 for different number of  co-channel interfering cells.}}
		\end{figure}
		To investigate the effect of co-channel intercell interference, we consider a multi-cell cellular network according to Fig.~\ref{fig:sim_multicell}-(a), wherein the central cell $C_0$ serves D2D pairs according to Algorithms 1 and 2, and a subset of surrounding co-channel interfering cells $C_1$-$C_6$ impose interference due to frequency reuse. Fig. \ref{fig:sim_multicell}-(b) shows the performance  versus different number of co-channel interfering cells, considering LoS and NLoS intercell interference\footnote{By using the terms LoS and NLoS interferences, we mean that the channel between the interfering transmitter and the desired receiver is LoS and NLoS, respectively.}. It is observed that for both algorithms,  NLoS intercell interference has little impact on the performance compared to LoS intercell interference. This is due to high NLoS path-loss exponent and  high distance of interfering source from non-adjacent co-channel cells.
	}

	\subsubsection{Optimality of Algorithm 2}
	 	\begin{figure}
	 		\centering
	 		\includegraphics [width=230pt,height=130pt]{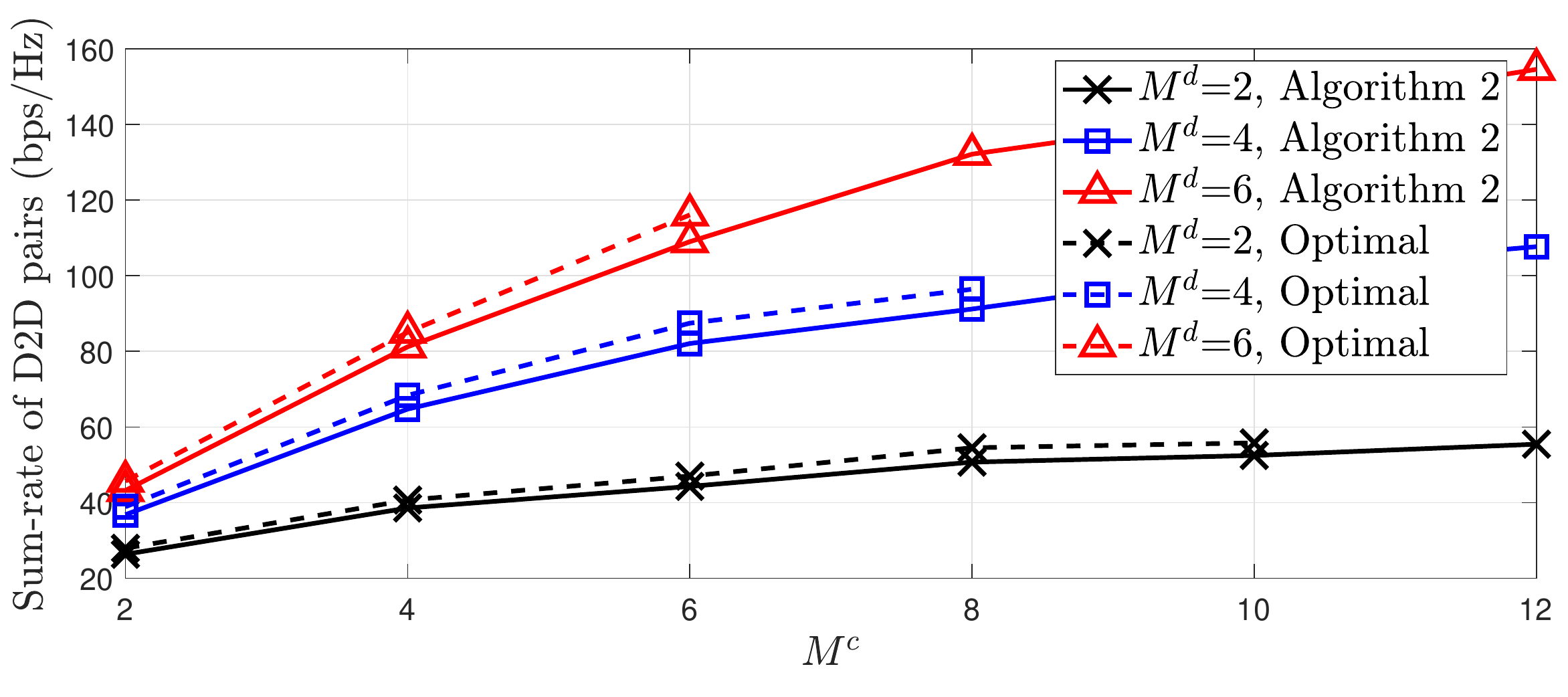} 
	 		\caption{Investigating the optimality of Algorithm 2.} 
	 		\label{fig:sim_mcopt}
	 	\end{figure}
	Finally, we study the optimality of Algorithm~2 compared to globally optimal solution obtained by exhaustive search through all possible channel and link-type options. 
	Fig. \ref{fig:sim_mcopt} illustrates this comparison versus different total number of users.
	Note that due to the exponential complexity of the exhaustive search, we are only able to present numerical results for small total number of D2D and cellular users. It is seen that the performance of our proposed algorithm is near-optimal for the evaluated total number of cellular and D2D users.

	\section{Conclusion}	
	\label{sec:conclusions}
	In this paper, we derived analytical closed form expressions of outage probability, ergodic capacity, and frame decoding error probability  taking into account Nakagami-m and Rician fading channel models for terrestrial and aerial transmissions, respectively. The expressions were employed to devise efficient and ultra-reliable resource (subchannel, power, and link-type) management of D2D pairs and cellular users. Simulation results demonstrate the significance of aerial relays compared to ground relays in increasing the throughput of D2D pairs especially for distant D2D pairs. The framework can be extended to optimize the trajectory of UAV relays and  a large scale network with multiple interferers. The characterization of frame decoding error probability can be extended  under non-finite and non-quasi-static block regimes, and also, the proposed resource allocation can be extended for multi-hop UAV relaying in future works.

	\begin{appendices}
		\renewcommand{\thesectiondis}[2]{\Alph{section}:}
		\section{Proof of Lemma \ref{lm:outage_L_N}}
		\label{apx:outage_L_N}
		Let $y$ and $x$ be unit mean random variables corresponding to the fading channels of the main LoS link $k$ and the interfering NLoS link $k'$ respectively. From (\ref{eq:gamma_k}) we obtain \eqref{eq:apxa} expressed at the top of  next page 
		\begin{figure*}
		\begin{align*}
		\label{eq:apxa}
		&O_{L,N}({\alpha}) 
		= \textrm{Pr}\{ y < \alpha x  \} 
		=  \int_{0}^{\infty} \int_{0}^{\alpha x} f_G(x,m) f_R(y,K) dy dx
		\\
		&= \KK \times
		\int_{0}^{\infty} \int_{0}^{\alpha x} x^{m-1} e^{-mx-\KP y} I_0(\sqrt{4K\KP y}) dydx \nonumber \\
		& \stackrel{(a)}{=} \KK \times
		\int_{0}^{\infty} x^{m-1} e^{-mx} \int_{0}^{\alpha x} e^{-\KP y} \sum_{j=0}^{\infty} \frac{(K \KP)^j y^j}{(j!)^2} dydx \nonumber \\
		& = \KK  \times
		\sum_{j=0}^{\infty} \frac{(K \KP)^j}{(j!)^2} \int_{0}^{\infty} x^{m-1} e^{-mx} \int_{0}^{\alpha x} e^{-\KP y}   y^j dydx \nonumber \\
		& \stackrel{(b)}{=} \KK \sum_{j=0}^{\infty} \frac{(K \KP)^j}{(j!)^2} \times
		\int_{0}^{\infty} x^{m-1} e^{-mx}  \left[-\frac{e^{-\KP y}j!}{\KP^{j+1}}\sum_{k=0}^j \frac{(\KP y)^k}{k!} \right]_0^{\alpha x} dx \nonumber \\
		& \stackrel{(c)}{=} \frac{m^{m}}{\Gamma(m)e^{K}} \sum_{j=0}^{\infty}  \frac{(K)^j}{j!} \times
		\left( \frac{\Gamma(m)}{m^{m}}- \sum_{k=0}^j \frac{(\alpha \KP)^k}{k!} \int_{0}^{\infty} e^{-x(m+\alpha \KP)}  x^{k+m-1}  dx \right) \nonumber \\
		& \stackrel{(d)}{=} 1 - \frac{m^m e^{-K} }{\Gamma(m)} \sum_{j=0}^{\infty} \frac{K^j}{j!} \sum_{k=0}^j \frac{(\alpha \KP)^k \Gamma(k+m)}{k!(m+\alpha \KP)^{k+m}}    = 1 - \frac{\frac{m^m}{\Gamma(m)} e^{-K}}{(m+\alpha \KP)^{m}}
		\sum_{j=0}^{\infty} \frac{K^j}{j!} \sum_{k=0}^j  \left(\frac{\alpha \KP}{m+\alpha \KP}\right)^k \! \frac{\Gamma(k+m)}{k!}
		\end{align*}
		\end{figure*}
		wherein (a) is derived by expanding the modified Bessel function as $I_0(x)=\sum_{k=0}^{\infty} \frac{(x/2)^{2k}}{(k!)^2}$ (\cite{gradshteyn2014table}, Eq. 8.447), (b) and (d) are derived using the equality $\int e^{-cx} x^{m-1} dx = -e^{-cx}\frac{\Gamma(m)}{c^m}\sum_{k=0}^{m-1} \frac{(cx)^k}{k!}$, and (c) holds due to the equality $\int_{0}^{\infty} e^{-c x} x^{m-1} dx=\frac{\Gamma(m)}{c^{m}}$.
		Letting $\theta=\frac{\alpha \KP}{m+\alpha \KP}$ results in
		\begin{align*}
		O_{L,N}(\alpha)  
		 = 1 - \frac{  \frac{{m}^{m}}{\Gamma({m})} e^{-K} \sum_{j=0}^{\infty} \frac{K^j}{j!} \frac{d^{{m}-1}}{d \theta^{m-1}} \left[ \sum_{k=0}^j  \theta^{k+{m}-1} \right]}		{({m}+\alpha \KP)^{m}}  \\
		 = 1 - \frac{ {m}^{m} e^{-K}}{\Gamma({m})({m}+\alpha \KP)^{{m}}} \frac{d^{m-1}}{d \theta^{{m}-1}} \left[ \frac{\theta^{{m}-1}}{1-\theta} ( e^K - \theta e^{K\theta} )  \right]. \numberthis
		\end{align*}
		\section{Proof of Lemma \ref{lm:outage_N_N}}
		\label{apx:outage_N_N}
		Let $y$ and $x$ be unit mean Gamma distributed random variables with shape factors $m$ and $m'$ corresponding to the desired link $k$ and interfering link $k'$ respectively. We have
		\begin{align*}
		& O_{N,N}(\alpha)= \textrm{Pr}\{ y < \alpha x  \}
		=\int_{0}^{\infty} \! \int_{0}^{\alpha x} \!\! f_G(x,m') f_G(y,m) dy dx 
		\\
		& = \frac{m^{m} m'^{m'}}{\Gamma(m) \Gamma(m')} \int_{0}^{\infty} x^{m'-1}  e^{-m' x} \int_{0}^{\alpha x} y^{m-1} e^{-m y}  dydx 
		\\
		\\		
		& \stackrel{(a)}{=}  1 - \frac{m'^{m'}}{\Gamma(m')} \int_0^\infty x^{m' -1}e^{-x(m' +m \alpha) }  \sum_{k=0}^{m-1}  \frac{( m \alpha x)^k}{k!}  dx  
	 	 	\\&		
		\stackrel{(b)}{=} 1 - \frac{m'^{m'}}{\Gamma(m')} \sum_{k=0}^{m-1} \frac{(m \alpha)^k \Gamma(k+m')}{k! (m' +m \alpha)^{k+m'}},	
		\end{align*}
		where (a) and (b) are derived using the equalities $\int e^{-cx} x^{m-1} dx = -e^{-cx}\frac{\Gamma(m)}{c^m}\sum_{k=0}^{m-1} \frac{(cx)^k}{k!}$ and $\int_{0}^{\infty} e^{-c x} x^{m-1} dx=\frac{\Gamma(m)}{c^{m}}$, respectively.
		\section{Proof of Theorem \ref{th:decoding}}
		\label{apx:decoding}
		We can verify that $u(\gamma)$ is monotonically increasing and hence $Q(u(\gamma))$ is a monotonically decreasing function of $\gamma$. We can also show that $\underset{\gamma \rightarrow 0}{\mathrm{lim\ }} \eps_n(\gamma) = Q(0)=0.5$, and $\underset{\gamma \rightarrow \infty}{\mathrm{lim\ }} \eps_n(\gamma) = 0$. Fig. \ref{fig:q} shows a 2-level piece-wise linear approximation of \eqref{eq:Qu}. To achieve a better approximation, we can extend that to $L$-level piece-wise linear approximation, and estimate $\eps_n(\gamma)$ as:
		\begin{align}
		\eps_n(\gamma) \approx 
		\begin{cases}
		-\omega_i \gamma + \omega_i \gamma_i + 0.5(1-\frac{i}{L}), 
		& \textrm{if }  \gamma_{i-1}\leq\gamma < \gamma_i,
		\nonumber
		\\
		0, 
		& \textrm{if } \gamma\geq \gamma_L,
		\nonumber
		\end{cases}
		\end{align}
		where $\omega_i=(\eps_n(\gamma_{i-1})-\eps_n(\gamma_i))/(\gamma_i - \gamma_{i-1})=\frac{0.5/L}{\gamma_i - \gamma_{i-1}}$, $\gamma_i=\eps^{-1}(0.5(1-i/L))$ for $0\leq i \leq L-1$, and $\gamma_L=\eps_n^{-1}(\Delta)$, in which $\Delta \ll 0.5/L$ is a small value constant. 

		\begin{figure}
			\centering
			\includegraphics [width=184pt,height=100pt]{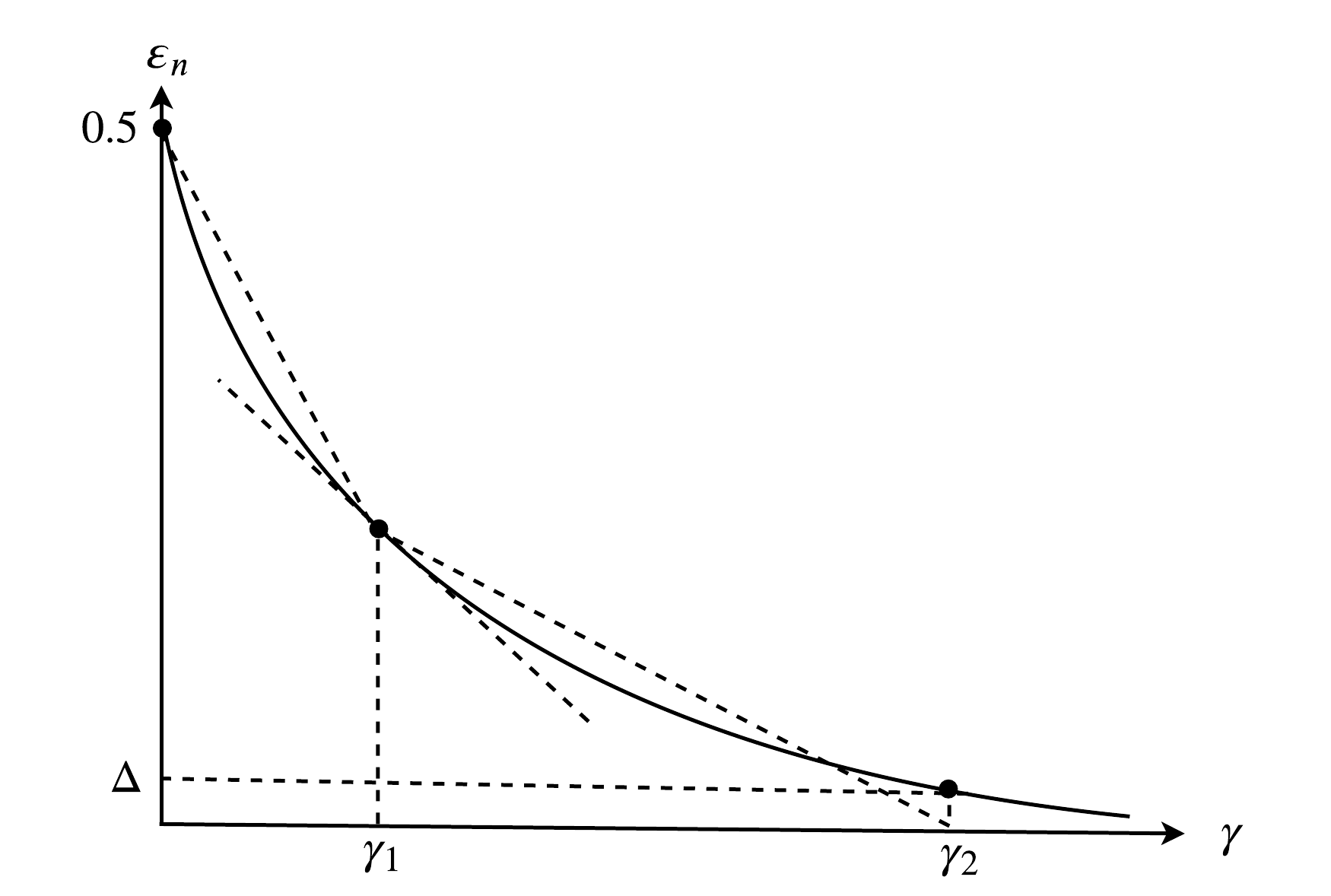} 
			\caption{Two-level piece-wise approximation of $\eps_n$.}
			\label{fig:q}
		\end{figure}

		The average decoding error probability is then obtained as follows:
		\small
		\begin{align}
		&\overline{\eps}_n  = \mathbb{E}\{\eps_n(\gamma)\}  = \int_0^\infty f_\gamma(\gamma) \eps_n(\gamma) d\gamma=
		K_0 +\sum_{i=1}^L \omega_i\int_{\gamma_{i-1}}^{\gamma_i} F_{\gamma}(\gamma) d \gamma
		\notag\\&=
		K_0 + \sum_{i=1}^{L-1}(\omega_{i-1}-\omega_i)H(\zeta_{j,i},\gamma_i) + \omega_L H(\zeta_{j,i},\gamma_L),
		\end{align}\normalsize
		where 
		\begin{equation}
		\label{eq:H}
		H(\alpha,\gamma)=\int_0^\gamma F_\gamma(\gamma) d\gamma = \int_0^\gamma O^c(\alpha \gamma) d \gamma,
		\end{equation}
		and $K_0$ is obtained as follows:
		\begin{align*}
		&K_0 = \sum_{i=1}^L -\omega_i \left(\gamma_i F(\gamma_i) -\gamma_{i-1}F(\gamma_{i-1})\right)
		\\
		&\hspace{55pt}
		+\left(\omega_i\gamma_i +0.5(1-i/L)\right)(F(\gamma_i)-F(\gamma_{i-1}))
		\\
		& \stackrel{(a)}{=} \sum_{i=1}^L 
		\frac{0.5}{L} F(\gamma_{i-1}) +0.5(1-{i}/{L})(F(\gamma_i)-F(\gamma_{i-1}))
		\\
		& = 0.5\sum_{i=1}^L 
		F(\gamma_i)\left(1-{i}/{L}\right) - F(\gamma_{i-1})\left(1-({i-1})/{L}\right)
		\\&= 0.5\left(F(\gamma_0)+F(\gamma_L)(1-{L}/{L})\right)=0,
		\end{align*} \normalsize
		where (a) follows from $\omega_i=\frac{0.5/L}{\gamma_i - \gamma_{i-1}}$. 
		For the NLoS interference, from \eqref{eq:H}, \eqref{eq:outage_N_N}, and \eqref{eq:O_c_N} we calculate $H(\alpha,\gamma)$ for the cellular user as follows:
		\begin{align*}
		H_N(\alpha&,\gamma)=\int_0^\gamma O^c(\alpha \gamma) d \gamma 
		= \int_0^\gamma O_{N,N} (\alpha \gamma) d \gamma 
		\\
		& =  \gamma - 
		\frac{m'^{m'}}{\Gamma(m')} \sum_{k=0}^{m-1} \int_0^\gamma \frac{(m \alpha \gamma)^k \Gamma(k+m')}{k! (m' +m \alpha \gamma)^{k+m'}}d \gamma
		\\
		& = \gamma - 
		\frac{m'^{m'}}{\Gamma(m')} \sum_{k=0}^{m-1} 
		\frac{\Gamma(k+m')(m\alpha)^k \gamma^{k+1}}{(k+1)! (m')^{k+m'}} 
		\\
		& \hspace{20pt}
		\times {}_2 F_1 \left(k+1,k+m',k+2,-\frac{m\alpha\gamma}{m'}\right).
		\numberthis
		\end{align*}
		Similarly, for the LoS interference scenario, by using from \eqref{eq:H}, \eqref{eq:outage_N_L} and \eqref{eq:O_c_L} we have
		\begin{align*}
		&H_L(\alpha,\gamma)=\int_0^\gamma O^c(\alpha \gamma) d \gamma 
		= \int_0^\gamma O_{N,L}(\alpha \gamma) d \gamma
		\\
		&
		= \int_0^\gamma \left(1- O_{L,N}({1}/{\alpha \gamma})\right) d\gamma
		\\
		& =  
		\frac{1}{\Gamma(m)} e^{-K}\sum_{j=0}^{\infty} \frac{K^j}{j!} 
		\sum_{k=0}^j 
		\frac{\Gamma(k+m)\KP^k }{k!} 
		\\
		& \hspace{115pt}
		\times \int_0^\gamma \frac{ (m\alpha \gamma)^m }{\left(K+1+m \alpha \gamma \right)^{k+m}} d\gamma
		\\
		&=
		\frac{ e^{-K} (m\alpha )^m  \gamma^{m+1}} {\Gamma(m)  (K+1)^m (m+1)}\sum_{j=0}^{\infty} \frac{K^j}{j!} 
		\sum_{k=0}^j 
		\frac{\Gamma(k+m) }{k!} 
		\\
		& \hspace{70pt}
		\times 
		{}_2 F_1 \left(m+1,m+k,m+2,-\frac{m\alpha\gamma}{K+1}\right).
		\end{align*}
	\end{appendices}

	\bibliographystyle{IEEEtran}
	\bibliography{Mybib}
	
\begin{IEEEbiography}[{\includegraphics[width=1in,height
=1.25in,clip,keepaspectratio]{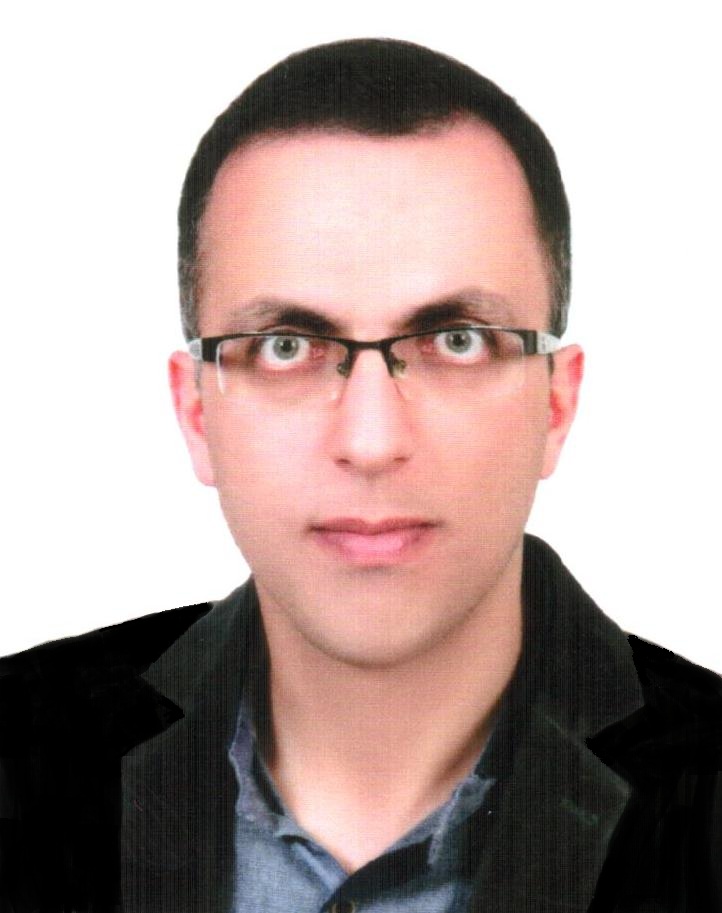}}]{Mehdi Monemi}
Mehdi Monemi received the B.Sc., M.Sc., and Ph.D. degrees all in electrical and computer engineering from Shiraz University, Shiraz, Iran, and Tarbiat Modares University, Tehran, Iran, and Shiraz University, Shiraz, Iran in 2001, 2003 and 2014 respectively. After receiving his Ph.D, he has been working as the project manager in several companies, and is currently an assistant professor in the Department of Electrical Engineering, Salman Farsi University of Kazerun, Kazerun, Iran. He was a visiting researcher in the Department of Electrical and Computer Engineering, York university, Toronto, Canada from June 2019 to September 2019,  Iran. His current research interests include resource allocation in wireless networks, and traffic engineering in computer networks.
\end{IEEEbiography}

\begin{IEEEbiography}[{\includegraphics[width=1.1in,height=1.35in]{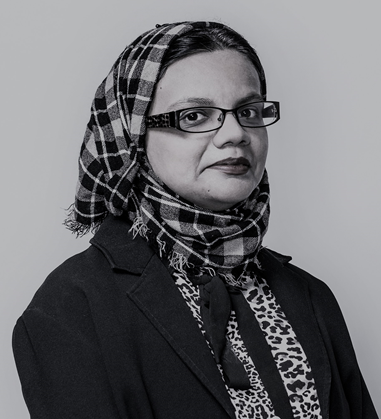}}]
{Hina Tabassum} (SM'17) Hina Tabassum is currently an Assistant Professor at the Lassonde School of Engineering, York University, Canada. Prior to that, she was a postdoctoral research associate at the Department of Electrical and Computer Engineering, University of Manitoba, Canada. She received her PhD degree from King Abdullah University of Science and Technology (KAUST). She is a Senior member of IEEE and registered Professional Engineer in the province of Ontario, Canada. She has been recognized as an Exemplary Reviewer (Top $2\%$ of all reviewers) by IEEE Transactions on Communications in 2015, 2016, 2017, and 2019. Currently, she is serving as an Associate Editor in IEEE Communications Letters and IEEE Open Journal of Communications Society. Her research interests include stochastic modeling and optimization of wireless networks including vehicular, aerial, and  satellite networks, millimeter and terahertz communication networks, software-defined networking and virtualized resource allocation in wireless networks.
\end{IEEEbiography}

\end{document}